\documentclass{LMCS}

\def\dOi{10(4:9)2014}
\lmcsheading%
{\dOi}
{1--29}
{}
{}
{Feb.~18, 2014}
{Dec.~10, 2014}
{}

\ACMCCS{[{\bf Software and its engineering}]: Software notations and
  tools---General programming languages---Language features; [{\bf
      Theory of computation}]: Semantics and reasoning}
\subjclass{D3.3, F3.2, F3.3}

\usepackage{mathtools, amssymb}
\usepackage{xspace}
\usepackage{mathpartir}
\usepackage[llbracket,rrbracket]{stmaryrd}
\usepackage{xypic}
\usepackage{booktabs}
\usepackage{listings}



\newcommand{\NN}{\mathbb{N}} 

\newcommand{\Eff}{\textit{Eff}\xspace}

\newcommand{\set}[1]{\{#1\}}
\newcommand{\such}{\mid}


\newcommand{\lift}[1]{#1^\dagger}

\newcommand{\stricto}{\multimap}

\newcommand{\effcase}{\mathop{\text{\texttt{|}}}}

\newcommand{\bnfis}{\mathrel{\;{:}{:}\!=}\;}
\newcommand{\bnfor}{\mathrel{\;\big|\;}}
\newcommand{\defeq}{\mathrel{\;\stackrel{\text{def}}{=}\;}}
\newcommand{\rulename}[1]{{\mdseries \small \textsc{#1}}}
\newcommand{\ctx}{\Gamma}
\newcommand{\ent}{\vdash}
\newcommand{\T}{\mathrel{:}}
\newcommand{\E}{\,{!}\,}
\newcommand{\sig}{\Sigma}

\renewcommand{\le}{\leqslant}
\renewcommand{\leq}{\leqslant}

\newcommand{\type}[1]{\mathtt{#1}}

\newcommand{\emptyty}{\type{empty}}
\newcommand{\unitty}{\type{unit}}
\newcommand{\natty}{\type{nat}}
\newcommand{\boolty}{\type{bool}}
\renewcommand{\to}[1][]{
  \def\arg{#1}%
  \ifx\arg\emptyarg\rightarrow\else\xrightarrow{#1}\fi%
}
\newcommand{\hto}[1][]{
  \def\arg{#1}%
  \ifx\arg\emptyarg\Rightarrow\else\xRightarrow{#1}\fi%
}
\newcommand{\C}{\underline{C}}
\newcommand{\D}{\underline{D}}

\newcommand{\purely}{\varnothing}


\renewcommand{\succ}{\kpre{succ}}

\newcommand{\op}{\mathtt{op}}
\newcommand{\hash}[2]{#1\text{\texttt{\char35}}#2}
\newcommand{\call}[4]{{\hash{#1}{#2}\,{#3}\, #4}}
\newcommand{\cont}[3][]{#1(#2.\,#3#1)}  
\newcommand{\anon}{\text{\texttt{\char95}}\,}

\newcommand{\kword}[1]{\ensuremath{\mathtt{#1}}}
\newcommand{\kord}[1]{\kword{#1}}
\newcommand{\kop}[1]{\;\kword{#1}\;}
\newcommand{\kpre}[1]{\kword{#1}\;}

\newcommand{\unt}{\kord{\text{\textnormal{\texttt{()}}}}}

\newcommand{\absurd}{\kpre{absurd}}

\newcommand{\matchnat}[4]{\kpre{match} #1 \kop{with} 0 \mapsto #2 \effcase \succ #3 \mapsto #4}

\newcommand{\tru}{\kord{true}}
\newcommand{\fls}{\kord{false}}

\newcommand{\semtru}{\mathrm{t\!t}}
\newcommand{\semfls}{\mathrm{ff}}

\newcommand{\fun}[1]{\kpre{fun} #1 \mapsto}
\newcommand{\handler}{\kpre{handler}}
\newcommand{\ocs}{\textit{ocs}}
\newcommand{\ocsnil}{\textit{nil}}
\newcommand{\val}{\kpre{val}}


\newcommand{\conditional}[3]{\kpre{if} #1 \kop{then} #2 \kop{else} #3}
\newcommand{\withhandle}[2]{\kpre{with} #1 \kop{handle} #2}

\newcommand{\letin}[1]{\kpre{let} #1 \kop{in}}
\newcommand{\letrecin}[1]{\kpre{let} \kpre{rec} #1 \kop{in}}

\newcommand{\sem}[2]{\llbracket #1 \rrbracket #2}
\newcommand{\xsem}[1]{\llbracket #1 \rrbracket}
\newcommand{\tsem}[1]{\llbracket\![ #1 \rrbracket\!]}
\newcommand{\skel}[1]{\llbracket #1 \rrbracket}
\newcommand{\skele}[1]{\llbracket #1 \rrbracket_e}
\newcommand{\skelc}[1]{\llbracket #1 \rrbracket_c}

\newcommand{\retract}[4]{\xymatrix{**[l]{#1} \ar@<0.25em>[r]^{#3} & **[r]{#2} \ar@<0.25em>[l]^{#4}}}
\newcommand{\retractx}[4]{\xymatrix@-3em{**[l]{#1} \ar@<0.25em>[r]^{#3} & **[r]{#2} \ar@<0.25em>[l]^{#4}}}
\newcommand{\retractxx}[4]{\xymatrix@-6em{**[l]{#1} \ar@<0.25em>[r]^{#3} & **[r]{#2} \ar@<0.25em>[l]^{#4}}}

\newcommand{\compdom}{T}

\newcommand{\inval}{\mathsf{in}_{\mathsf{val}}}
\newcommand{\inop}[1]{\mathsf{in}_{#1}}

\newcommand{\aprx}[1]{\triangleleft_{#1}}



\newcommand{\step}{\leadsto}
\newcommand{\eval}{\Downarrow}

\makeatletter
\providecommand*{\cupdot}{%
  \mathbin{%
    \mathpalette\@cupdot{}%
  }%
}
\newcommand*{\@cupdot}[2]{%
  \ooalign{%
    $\m@th#1\cup$\cr
    \sbox0{$#1\cup$}%
    \dimen@=\ht0 %
    \sbox0{$\m@th#1\cdot$}%
    \advance\dimen@ by -\ht0 %
    \dimen@=.5\dimen@
    \hidewidth\raise\dimen@\box0\hidewidth
  }%
}

\providecommand*{\bigcupdot}{%
  \mathop{%
    \vphantom{\bigcup}%
    \mathpalette\@bigcupdot{}%
  }%
}
\newcommand*{\@bigcupdot}[2]{%
  \ooalign{%
    $\m@th#1\bigcup$\cr
    \sbox0{$#1\bigcup$}%
    \dimen@=\ht0 %
    \advance\dimen@ by -\dp0 %
    \sbox0{\scalebox{2}{$\m@th#1\cdot$}}%
    \advance\dimen@ by -\ht0 %
    \dimen@=.5\dimen@
    \hidewidth\raise\dimen@\box0\hidewidth
  }%
}
\makeatother

\newcommand{\Drt}{\Delta}
\newcommand{\Rgn}{R}
\newcommand{\ops}{\mathcal{O}}
\newcommand{\iotas}{\mathcal{I}}
\newcommand{\exs}[1]{\exists #1 .\ }
\newcommand{\fra}[1]{\forall #1 .\ }
\newcommand{\evctx}{\mathcal{E}}
\newcommand{\Equiv}{\;\;\equiv\;\;}
\newcommand{\hndl}{\mathcal{H}}


\lstset{%
language=Caml,
moredelim=*[is][\itshape]{/@}{@/},
numbers=none,mathescape=true,showstringspaces=false,
morekeywords={handle,handler,with,new,effect,operation,type,end,val,finally,match,true,false},
keywordstyle=\relax,
xleftmargin=1em,basicstyle=\ttfamily\small}
\lstnewenvironment{source}{\lstset{
basicstyle=\ttfamily\small,
}}{}

\usepackage{hyperref}

\begin{document}

\title[An Effect System for Algebraic Effects and Handlers]{An Effect
  System for Algebraic Effects and Handlers\rsuper*}

\author[A.~Bauer]{Andrej Bauer}
\address{Faculty of Mathematics and Physics, University of Ljubljana, Slovenia}
\email{Andrej.Bauer@andrej.com, matija.pretnar@fmf.uni-lj.si}

\author[M.~Pretnar]{Matija Pretnar}

\keywords{algebraic effects, effect handlers, effect system}

\titlecomment{{\lsuper*}A preliminary version of this work was presented at CALCO 2013, see~\cite{bauer2013effect}.}

\begin{abstract}
  We present an effect system for \emph{core \Eff}, a simplified variant of \Eff,
  which is an ML-style programming language with first-class algebraic effects and handlers. We
  define an expressive effect system and prove safety of operational semantics with
  respect to it. Then we give a domain-theoretic denotational semantics of core \Eff,
  using Pitts's theory of minimal invariant relations, and prove
  it adequate. We use this fact to develop tools for finding useful contextual equivalences,
  including an induction principle.
  To demonstrate their usefulness, we use these tools to derive the usual equations
  for mutable state, including a general commutativity law for computations using non-interfering references.
  We have formalized the effect system, the operational
  semantics, and the safety theorem in Twelf.
\end{abstract}

\maketitle

\section{Introduction}
\label{sec:introduction}

An \emph{effect system} supplements a traditional type system for a programming language
with information about which computational effects may, will, or will not happen when a
piece of code is executed. A well designed and solidly implemented effect system helps
programmers understand source code, find mistakes, as well as safely
rearrange, optimize, and parallelize code~\cite{lucassen88polymorphic,kammar12algebraic}.
As many before
us~\cite{lucassen88polymorphic,talpin1992polymorphic,wadler1999marriage,kammar13handlers}
we take on the task of striking just the right balance between simplicity and expressiveness
by devising an effect system for \Eff~\cite{bauer12programming},
an ML-style programming language with first-class algebraic effects~\cite{plotkin03algebraic,plotkin2001adequacy}
and handlers~\cite{plotkin13handling}.

Our effect system is \emph{descriptive} in the sense that it provides information about possible computational effects but it does not prescribe them. In contrast, Haskell's monads \emph{prescribe} the possible effects by wrapping types into computational monads. In the implementation we envision effect inference which never fails, although in some cases it may be uninformative. Of course, typing errors are still errors.

An important feature of our effect system is \emph{non-monotonicity}: it detects the fact that a handler removes some effects. For instance, a piece of code which uses mutable state is determined to actually be pure when wrapped by a handler that handles away lookups and updates.

\goodbreak
Our contributions are as follows:
\begin{enumerate}
\item 
  We define \emph{core Eff}, a fragment of the language which retains the essential
  features of \Eff, including first-class handlers and instances (Section~\ref{sec:core-eff}),
  although we leave out dynamic creation of new instances.
\item
  We give small-step and big-step operational semantics for core \Eff and
  show them to be equivalent
  (Section~\ref{sec:operational-semantics}).
\item
  We devise an expressive effect system for core \Eff and prove
  safety of the operational semantics with respect to it (Section~\ref{sec:effect-system}).
\item
  Using the standard domain-theoretic apparatus
  and Pitts's theory of minimal invariant relations~\cite{Pitts96},
  we provide denotational semantics for core \Eff and
  prove an adequacy theorem (Section~\ref{sec:denotational-semantics}).
\item
  We identify a set of observational equivalences and an induction principle
  that allow us to reason about effectful computations (Section~\ref{sec:reasoning}).
\item 
  We demonstrate how the equivalences are used by deriving the standard equations for
  state from general principles. The induction principle is used in a proof of a general
  commutativity law which allows us to interchange two computations that use
  non-interfering references (Section~\ref{sec:examples}).
\item
  We formalized
  core \Eff, the operational semantics, the effect system, and the safety theorem
  in Twelf~\cite{twelf} (Section~\ref{sec:formalization}).
\end{enumerate}


\section{Core Eff}
\label{sec:core-eff}

The current implementation of \Eff includes a number of features, such as syntactic sugar, products, records, inductive types, type definitions, effect definitions, etc., which are inessential for a conceptual analysis. We therefore restrict attention to \emph{core \Eff}, a fragment of the language described here.
We refer the readers to~\cite{bauer12programming} for a more thorough introduction of how
one actually programs in \Eff.

In \Eff all computational effects are accessed uniformly and exclusively through \emph{operations}. These are a primitive concept, of which typical examples are reading and writing on a communication channel, updating and looking up the contents of a reference, and raising an exception. Thus, in \Eff each terminating computation results either in an effect-free value, or it calls an operation. Each operation has an associated delimited \emph{continuation}, which is a suspended computation awaiting the result of the operation.

Operations do not actually perform effects, but are just suspended computations 
whose behavior is controlled by a second primitive notion, the \emph{effect handlers}. These are like exception handlers, except that an effect handler has access to the continuation of the handled operation, and so may restart the computation after the operation is handled. With handlers we may implement all the usual computational effects, as well as great variety of others, such as transactional memory, non-deterministic execution strategies, stream redirection, cooperative multi-threading, and delimited continuations. At the top level there may be built-in handlers that provide interaction with the external environment, although we do not consider these in core \Eff.

Since \Eff is geared towards practical programming, it and core \Eff depart in several respects from previous work on handlers and algebraic effects~\cite{plotkin13handling, kammar13handlers}. First, rather than imposing equations on handlers by a typing discipline, the programmer may write arbitrary handlers, and then prove that a particular handler satisfies the desired equations. We demonstrate the technique in Section~\ref{sub:reasoning-references}, where we implement a state handler and show that it satisfies the standard equations.
Second, \Eff uses fine-grained call-by-value evaluation strategy~\cite{levy03modelling} rather than the theoretically more desirable call-by-push-value~\cite{levy06call-by-push-value} because we found the former to be closer to programming practice as well as easier to implement.
Third, every effect has multiple \emph{instances}. For example, a program may write and read from multiple communication channels, raise different kinds of exceptions, and manipulate multipartite state. Thus in core \Eff an \emph{operation symbol $\op$} is always paired with an \emph{instance $\iota$} to give an \emph{operation $\hash{\iota}{\op}$}. From a theoretical point of view instances are straightforward (as long as we do not generate them dynamically) and inessential, but are absolutely necessary for practical programming.

\subsection{Effects and types}
\label{sec:effects-and-types}

To get things going we presume given a collection of \emph{effects}
\begin{equation*}
  \text{Effect}\ E \bnfis \kord{exception} \bnfor \kord{ref} \bnfor \cdots
\end{equation*}
which in full \Eff are declared by the programmer. For each $E$ there is a given set $\iotas_E$ of \emph{instances} $\iota_1, \iota_2, \iota_3, \ldots$ of $E$. The instances may be thought of as atomic names. In full \Eff they may be dynamically created, but to keep the semantics reasonably simple we assume a fixed set. Additionally, with each $E$ we associate a set of $\ops_E$ of \emph{operation symbols} $\op_1, \op_2, \ldots$ An operation symbol is associated with at most one effect.

The terms of core \Eff are split into effect-free \emph{expressions} and possibly effectful \emph{computations}, as described in the next subsection.
Consequently, the type system of core \Eff consists of \emph{pure types} for expressions and \emph{dirty types} for computations:
\begin{align*}
  \text{Pure type}\ A, B &\bnfis
    \boolty \bnfor
    \natty \bnfor
    \unitty \bnfor
    \emptyty \bnfor
    A \to \C \bnfor
    E^\Rgn \bnfor
    \C \hto \D
    \\
  \text{Dirty type}~\C, \D &\bnfis
    A \E \Drt
  \\
  \text{Region}\ \Rgn &\bnfis \set{\iota_1, \dots, \iota_n}
  \\
  \text{Dirt}\ \Drt &\bnfis \set{\hash{\iota_1}{\op_1}, \dots, \hash{\iota_n}{\op_n}}
\end{align*}
A dirty type $A \E \Drt$ is just a pure type $A$ tagged with a finite set $\Drt$ of operations that might be called during evaluation. We require that any operation $\hash{\iota}{\op}$ appearing in $\Drt$ is well-formed in the sense that $\iota \in \iotas_E$ and $\op \in \ops_E$ for some effect~$E$.

The pure types comprise the usual ground types, the function types $A \to \C$, the effect types $E^\Rgn$, and the \emph{handler types $\C \hto \D$}. Note that the function type takes pure types to dirty ones because a function accepts a pure expression as an argument and may call operations when evaluated.  We let $\E$ bind more strongly than $\to$, so that $A \to B \E \Drt$ means $A \to (B \E \Drt)$. Each effect type $E^\Rgn$ is tagged with a finite set of instances $\Rgn = \set{\iota_1, \dots, \iota_n} \subseteq \iotas_E$ which tells us that the expression equals one of the instances in~$\Rgn$. Finally, $\C \hto \D$ is the type of handlers which take computations of \emph{ingoing} type $\C$ to computations of \emph{outgoing} type $\D$.

We assume that each effect $E$ has an associated \emph{effect signature}
\begin{equation*}
  \sig_E = 
  \set{\op_1 \T A^{\op_1} \to B^{\op_1}, \dots, \op_n \T A^{\op_n} \to B^{\op_n}}
\end{equation*}
which assigns to each operation $\op_i \in \ops_E$ its \emph{parameter type $A^{\op_i}$} and \emph{result type
  $B^{\op_i}$}. In full \Eff the signature is part of the definition of an effect, so
for instance we might have
\begin{align*}
  \sig_{\kord{ref}} &= \set{\kord{lookup} \T \unitty \to \natty, \kord{update} \T \natty \to \unitty}, \\
  \sig_{\kord{exception}} &= \set{\kord{raise} \T \unitty \to \emptyty}.
\end{align*}
Note that the signature may create circularities such as
\begin{equation*}
  \sig_E = \set{ \op \T \unitty \to (\unitty \to \unitty \E \set{\hash{\iota}{\op}}) }.
\end{equation*}
Consequently, the denotational semantics of types in Section~\ref{sec:denotational-semantics}
will involve recursive domain equations.

\subsection{Terms}
\label{sec:terms}

The abstract syntax of terms of core \Eff is as follows:
\begin{align*}
  \text{Expression}~e \bnfis {}
    &x \bnfor
    \tru \bnfor
    \fls \bnfor
    0 \bnfor
    \succ e \bnfor
    \unt \bnfor
    \fun{x \T A} c \bnfor
    \iota \bnfor
    h
    \\  
  \text{Handler}~h \bnfis {}
    &(
      \handler
      \val x \T A \mapsto c_v \effcase
      \ocs
    )
    \\
  \text{Operation cases}~\ocs \bnfis {}
    &\ocsnil_{\C} \bnfor
    (\call{e}{\op}{x}{k} \mapsto c \effcase \ocs)
    \\
  \text{Computation}~c \bnfis
    & \val e \bnfor
    \call{e_1}{\op}{e_2}{\cont{y}{c}} \bnfor
    \withhandle{e}{c} \bnfor \\
    & \conditional{e}{c_1}{c_2} \bnfor
    \kord{absurd}_{\C} \; e \bnfor
    e_1 \, e_2 \bnfor \\
    & (\matchnat{e}{c_1}{x}{c_2}) \bnfor \\
    &\letin{x = c_1} c_2 \bnfor
    \letrecin{f \, x \T A \to \C = c_1} c_2
\end{align*}
In order to ensure that each term has at most one skeletal typing derivation, cf.\ Subsection~\ref{sub:skeletal-types}, certain terms include typing annotations.
We shall omit these when they do not play a role.
The \Eff implementation does not have typing annotations because its effect system
automatically infers types and effects~\cite{Pretnar13}.

An expression is either a variable, a constant of ground type, a function abstraction (note that we abstract over computations), an \emph{effect instance}, or a \emph{handler}. It is worth noting that both instances and handlers are first-class values. We sometimes abbreviate $\kord{succ}^n\, 0$ as $n$.
A handler consists of a single \emph{value case} and multiple \emph{operation cases}, which describe
how values and operations are handled, respectively.
We defined operation cases inductively as lists, which is how they are formalized in Twelf,
but we also write them as $(\hash{e_i}{\op_i} \, x_i \, k_i \mapsto c_i)_i$.

A computation is either a pure expression, an \emph{operation call}, a $\kord{handle}$ construct, an eliminator for a ground type, an application, a $\kord{let}$ binding, or a recursive function definition.


\section{Operational semantics}
\label{sec:operational-semantics}

We first describe the operational semantics informally.
A computation $\val e$ is pure and indicates a ``final'' result~$e$, while an operation call $\call{e_1}{\op}{e_2}{\cont{y}{c}}$ is the principal way of triggering an effect. The instance $e_1$ and the operation symbol $\op$ together form an operation $\hash{e_1}{\op}$, its parameter is $e_2$, and $\cont{y}{c}$ the delimited continuation. We do not expect programmers to write explicit continuations, so the concrete syntax of \Eff only gives access to calls through functions of the form $\fun{x} \call{e}{\op}{x}{\cont{y}{\val y}}$, known also as \emph{generic effects}~\cite{plotkin03algebraic}.
In examples we shall use generic effects rather than explicit continuations, and there we write them as $\hash{e}{\op}$.
A general operation call $\call{e_1}{\op}{e_2}{\cont{y}{c}}$ may then be expressed in terms of a generic effect and a $\kord{let}$ binding as $\letin{y = \hash{e_1}{\op} \, e_2} c$.

A binding $\letin{x = c_1} c_2$ is evaluated as follows:
\begin{enumerate}
\item If $c_1$ evaluates to $\val e$ then the binding evaluates as $c_2$ with $x$ bound to $e$.
\item If $c_1$ evaluates to an operation call $\call{\iota}{\op}{e}{\cont{y}{c_1'}}$, then the binding evaluates to \[\call{\iota}{\op}{e}{\cont{y}{\letin{x = c_1'} c_2}},\] where we assume that $y$ does not occur free in $c_2$.
\end{enumerate}
It may be helpful to think of $\kord{val}$ and $\kord{let}$ as being similar to Haskell $\kord{return}$ and $\kord{do}$, respectively. In ML $\kord{val}$ is invisible, while $\kord{let}$ is essentially the same as ours.

The $\kord{handle}$ construct applies a handler to a computation. If $h$ is the \emph{handler}
\begin{equation*}
  \handler \val x \mapsto c_v \effcase (\hash{\iota_i}{\op_i} \, x_i \, k_i \mapsto c_i)_i
\end{equation*}
and $c$ is a computation, then $\withhandle{h}{c}$ first evaluates $c$ which is then handled according to $h$:
\begin{enumerate}
\item If $c$ evaluates to $\val e$, then the $\kord{handle}$ construct evaluates as $c_v$ with $x$ bound to~$e$.
\item If $c$ evaluates to $\call{\iota}{\op}{e'}{\cont{y}{c'}}$, and $\hash{\iota_i}{\op_i} \, x_i \, k_i \mapsto c_i$ is the first operation case in $h$ for which $\hash{\iota}{\op} = \hash{\iota_i}{\op_i}$ then the $\kord{handle}$ construct evaluates to $c_i$ with $x_i$ and $k_i$ bound to $e'$ and $\fun{y} \withhandle{h}{c'}$, respectively. We assume that $y$ does not occur free in $h$.
\item If $c$ evaluates to an operation call $\call{\iota}{\op}{e'}{\cont{y}{c'}}$ which is not listed by~$h$, then the $\kord{handle}$ construct propagates the call and acts as if $h$ contained the clause
  \begin{equation*}
    \hash{\iota}{\op} \, x \, k \mapsto \call{\iota}{\op}{x}{\cont{y}{k \, y}}.
  \end{equation*}
  Thus it evaluates to $\call{\iota}{\op}{e'}{\cont{y}{\withhandle{h}{c'}}}$, where again we assume that $y$ does not occur free in~$h$.
\end{enumerate}
Note that the handler always wraps itself around the continuation so that subsequent
operations are handled as well. A binding
$
  \letin{x = c_1}{c_2}
$
is equivalent to
\begin{equation*}
  \withhandle{(\handler \val x \mapsto c_2)}{c_1}
\end{equation*}
so we could theoretically omit $\kord{let}$.

\subsection{Small-step semantics}
\label{sec:small-step-semantics}

The small-step operational semantics of core \Eff is defined in terms of a relation $c \step c'$,
which intuitively means that the computation $c$ takes a single step to $c'$.
There is no operational semantics for expressions, which are just inert pieces of data.
The relation $\step$ is defined inductively by the following rules:
{\small
  \begin{mathpar}
  \inferrule{
  }{
    \conditional{\tru}{c_1}{c_2} \step c_1
  }

  \inferrule{
  }{
    \conditional{\fls}{c_1}{c_2} \step c_2
  }

  \inferrule{
  }{
    (\matchnat{0}{c_1}{x}{c_2}) \step c_1
  }
  \hfill
  \inferrule{
  }{
    (\matchnat{(\succ e)}{c_1}{x}{c_2}) \step c_2[e / x]
  }

  \inferrule{
  }{
    (\fun{x} c) \, e \step c[e / x]
  }

  \inferrule{
    c_1 \step c_1'
  }{
    \letin{x = c_1} c_2 \step \letin{x = c_1'} c_2
  }

  \inferrule{
  }{
    \letin{x = (\val e)} c \step c[e / x]
  }

  \inferrule{
  }{
    \letin{x = (\call{\iota}{\op}{e}{\cont{y}{c_1}})} c_2
      \step \call{\iota}{\op}{e}{\cont{y}{\letin{x = c_1} c_2}}      
  }

  \inferrule{
  }{
    \letrecin{f \, x = c_1} c_2
      \step c_2[(\fun{x} \letrecin{f \, x = c_1} c_1) / f]
  }

  \inferrule{
    c \step c'
  }{
    \withhandle{e}{c} \step \withhandle{e}{c'}
  }
  \hfill
  \inferrule{
  }{
    \withhandle{(\handler \val x \mapsto c_v \effcase \ocs)}{(\val e)} \step c_v[e / x]
  }

  \inferrule{
    h \defeq (\handler \val x \mapsto c_v \effcase \ocs) \\
    (\op \T A^\op \to B^\op) \in \sig_E
  }{
    \withhandle{h}{(\call{\iota}{\op}{e}{\cont{y}{c}})}
      \step \ocs_{\hash{\iota}{\op}}(e, (\fun{y \T B^\op} \withhandle{h}{c}))
  }
\end{mathpar}}%
In the last rule for $\kord{let}$ binding and the last rule for the $\kord{handle}$ construct variable $y$ must not occur free in $c_2$ and $h$ respectively, and it goes without saying that the substitutions are capture avoiding. In the last rule we have an auxiliary definition of
$\ocs_{\hash{\iota}{\op}}$:
\begin{align*}
  (\ocsnil)_{\hash{\iota}{\op}}(e, \kappa) &= \call{\iota}{\op}{e}{\cont{y}{\kappa \, y}} \\
  (\call{\iota'}{\op'}{x}{k} \mapsto c \effcase \ocs)_{\hash{\iota}{\op}}(e, \kappa) &=
  \begin{cases}
    c[e / x, \kappa / k] & \text{if $\hash{\iota}{\op} = \hash{\iota'}{\op'}$} \\
    \ocs_{\hash{\iota}{op}}(e, \kappa) & \text{otherwise}
  \end{cases}
\end{align*}
In words, $\ocs_{\hash{\iota}{\op}}(e, \kappa)$ finds the first handler case in $\ocs$ that matches the operation $\hash{\iota}{\op}$ and executes it, or calls the operation again if no match is found.

\begin{exa}
\label{exa:reduction}
The non-standard state handler (recall that in examples we use generic effects)
\begin{align*}
  h \defeq {} &\handler \\
  &\effcase \val x \mapsto \hash{\iota}{\kord{update}} \, x \\
  &\effcase \call{\iota}{\kord{lookup}}{x}{k} \mapsto k \, 1 \\
  &\effcase \call{\iota}{\kord{update}}{x}{k} \mapsto k \, \unt
\end{align*}
treats the reference $\iota$ as if its content were always $1$, and
updates $\iota$ with the final result of the handled computation.
This update is not handled by~$h$ because it escapes its scope.
If we use $h$ to handle the computation
\begin{align*}
  c \defeq {}
  &\letin{x_1 = \hash{\iota}{\kord{lookup}} \, \unt} \\
  &\letin{x_2 = \hash{\iota}{\kord{update}} \, x_1} \\
  &\val (\succ x_1)
\end{align*}
the outcome of the first lookup is $1$, which is bound to $x_1$, the update is ignored and
finally $\hash{\iota}{\kord{update}} \, 2$ is called.
The exact reduction sequence is as follows, where we underline the active parts at each step
and indicate desugaring of generic effects with $\equiv$:
{\small
\begin{align*}
  &\withhandle{h}{\\
  &\quad \letin{x_1 = \underline{\hash{\iota}{\kord{lookup}} \, \unt}} \letin{x_2 = \hash{\iota}{\kord{update}} \, x_1} \val (\succ x_1)} \equiv {} \\
  &\withhandle{h}{\\
  &\quad \letin{\underline{x_1} = \underline{\hash{\iota}{\kord{lookup}}{\unt}{\cont{y_1}{\val y_1}}}} \letin{x_2 = \hash{\iota}{\kord{update}} \, x_1} \val (\succ x_1)} \step \\
  &\withhandle{\underline{h}}{\\
  &\quad \underline{\hash{\iota}{\kord{lookup}}{\unt}{\cont{y_1}{\letin{x_1 = \val y_1} \letin{x_2 = \hash{\iota}{\kord{update}} \, x_1} \val (\succ x_1)}}}} \step \\
  &\underline{(\fun{y_1} \withhandle{h}{(\letin{x_1 = \val y_1} \letin{x_2 = \hash{\iota}{\kord{update}} \, x_1} \val (\succ x_1)))}} \,\, \underline{1\vphantom{()}} \step \\
  &\withhandle{h}{(\letin{\underline{x_1} = \underline{\val 1}} \letin{x_2 = \hash{\iota}{\kord{update}} \, x_1} \val (\succ x_1))} \step \\
  &\withhandle{h}{(\letin{x_2 = \underline{\hash{\iota}{\kord{update}} \, 1}} \val 2)} \equiv \\
  &\withhandle{h}{(\letin{\underline{x_2} = \underline{\call{\iota}{\kord{update}}{1}{\cont{y_2}{\val y_2}}}} \val 2)} \step \\
  &\withhandle{\underline{h}}{(\underline{\call{\iota}{\kord{update}}{1}{\cont{y_2}{\letin{x_2 = \val y_2} \val 2}}})} \step \\
  &\underline{(\fun{y_2} \withhandle{h}{(\letin{x_2 = \val y_2} \val 2)})} \,\, \underline{\unt\vphantom{()}} \step \\
  &\withhandle{h}{(\letin{\underline{x_2} = \underline{\val \unt}} \val 2)} \step \\
  &\withhandle{\underline{h}}{\underline{(\val 2)}} \step
  \underline{\hash{\iota}{\kord{update}} \, 2} \equiv \call{\iota}{\kord{update}}{2}{\cont{y_3}{\val y_3}}
\end{align*}}%
\end{exa}

\subsection{Big-step semantics}
\label{sec:big-step-semantics}

In addition to small-step operational semantics, we also provide a big-step variant, which is closer to the actual implementation of \Eff. Define a \emph{result} to be a pure expression or an operation call:
\[
  \text{Result}\ r \bnfis
  \val e \bnfor
  \call{\iota}{\op}{e}{\cont{x}{c}}
\]
Big-step semantics $c \eval r$ evaluates a computation $c$ to a result $r$, according
to the following inductive rules:
{\small\begin{mathpar}
  \inferrule{
    c_1 \eval r
  }{
    \conditional{\tru}{c_1}{c_2} \eval r
  }

  \inferrule{
    c_2 \eval r
  }{
    \conditional{\fls}{c_1}{c_2} \eval r
  }

  \inferrule{
    c_1 \eval r
  }{
    (\matchnat{0}{c_1}{x}{c_2}) \eval r
  }

  \inferrule{
    c_2[e / x] \eval r
  }{
    (\matchnat{\succ e}{c_1}{x}{c_2}) \eval r
  }

  \inferrule{
    c[e / x] \eval r
  }{
    (\fun{x} c) \, e \eval r
  }

  \inferrule{
  }{
    \val e \eval \val e
  }

  \inferrule{
  }{
    \call{\iota}{\op}{e}{\cont{x}{c}} \eval \call{\iota}{\op}{e}{\cont{x}{c}}
  }

  \inferrule{
    c_1 \eval \val e \\
    c_2[e / x] \eval r
  }{
    \letin{x = c_1} c_2 \eval r
  }

  \inferrule{
    c_1 \eval \call{\iota}{\op}{e}{\cont{y}{c}}
  }{
    \letin{x = c_1} c_2 \eval \call{\iota}{\op}{e}{\cont{y}{\letin{x = c} c_2}}
  }

  \inferrule{
    c_2[(\fun{x} \letrecin{f \, x = c_1} c_1) / f] \eval r
  }{
    \letrecin{f \, x = c_1} c_2 \eval r
  }

  \inferrule{
    c \eval \val e \\
    c_v[e / x] \eval r
  }{
    \withhandle{(\handler \val x \mapsto c_v \effcase \ocs)}{c} \eval r
  }

  \inferrule{
    h \defeq (\handler \val x \mapsto c_v \effcase \ocs) \\
    c \eval \call{\iota}{\op}{e}{\cont{y}{c}} \\
    (\op \T A^\op \to B^\op) \in \sig_E \\\\
    \ocs_{\hash{\iota}{\op}}(e, (\fun{y \T B^\op} \withhandle{h}{c})) \eval r
  }{
    \withhandle{h}{c} \eval r
  }
\end{mathpar}}%
To relate the two semantics we define an auxiliary relation $\step^*$ by the rules
\begin{mathpar}
  \inferrule{
  }{
    \val e \step^* \val e
  }

  \inferrule{
  }{
    \call{\iota}{\op}{e}{\cont{x}{c}} \step^* \call{\iota}{\op}{e}{\cont{x}{c}}
  }

  \inferrule{
    c \step c' \\
    c' \step^* r
  }{
    c \step^* r
  }
\end{mathpar}
This is roughly the reflexive transitive closure of $\step$, except that it relates
computations to results rather than to computations. The small-step and big-step semantics
agree in the following sense.

\begin{prop}
  For all computations~$c$ and results~$r$, $c \eval r$ if and only if $c \step^* r$.
\end{prop}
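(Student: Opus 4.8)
The plan is to prove the two implications separately by induction, after isolating the two ``glue'' properties that make the inductions go through: a congruence property of $\step^*$ (needed left-to-right) and closure of $\eval$ under single-step expansion (needed right-to-left).

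\emph{Forward direction} ($c \eval r$ implies $c \step^* r$). I would induct on the derivation of $c \eval r$. For every big-step rule except those for $\kord{let}$ and $\kord{handle}$, the matching small-step rule rewrites $c$ in a single step to precisely the computation occurring in the premise (for conditionals, $\kord{match}$, application, and the recursive-function rule), or $c$ is already a result (for $\val e$ and operation calls); so the induction hypothesis together with the third rule defining $\step^*$ (prepending a $\step$-step) closes the case at once. The $\kord{let}$ and $\kord{handle}$ rules need auxiliary \emph{congruence lemmas}, e.g.: if $c_1 \step^* \val e$ then for all $c_2$ and $r$, $c_2[e/x] \step^* r$ implies $\letin{x = c_1} c_2 \step^* r$; if $c_1 \step^* \call{\iota}{\op}{e}{\cont{y}{c_1'}}$ then $\letin{x = c_1} c_2 \step^* \call{\iota}{\op}{e}{\cont{y}{\letin{x = c_1'} c_2}}$; and the evident analogues for $\withhandle{e}{c}$, built from the congruence rule ``$c \step c'$ implies $\withhandle{e}{c} \step \withhandle{e}{c'}$'' and the two handler-reduction axioms (in the value case also using $\val e \eval \val e$, and in the operation case keeping the side condition $(\op \T A^\op \to B^\op) \in \sig_E$). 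Each such lemma is proved by induction on the given $\step^*$ derivation: the base case applies the relevant small-step reduction axiom, the inductive case applies the appropriate small-step congruence rule and prepends. It is essential to universally quantify over the trailing computation ($c_2$, resp.\ the value body $c_v$), the handler $h$, and the final result $r$, so the induction hypothesis is strong enough.

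\emph{Backward direction} ($c \step^* r$ implies $c \eval r$). Here I would induct on the derivation of $c \step^* r$. The two base cases ($\val e \step^* \val e$ and the operation-call axiom) are discharged by the corresponding big-step axioms. The inductive case $c \step c'$, $c' \step^* r$ uses the key lemma: if $c \step c'$ and $c' \eval r$, then $c \eval r$, which I would prove by induction on the derivation of $c \step c'$. For the ``computational'' small-step rules (conditionals, $\kord{match}$, application, $\kord{let}$ with a value, operation-call propagation through $\kord{let}$, the recursive-function rule, and the two $\kord{handle}$ reductions) the redex is rebuilt by a single big-step rule whose premise is exactly $c' \eval r$ (with $\val e \eval \val e$ and/or the $\sig_E$ side condition supplied as the remaining premises). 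For the congruence rules $\letin{x = c_1} c_2 \step \letin{x = c_1'} c_2$ and $\withhandle{e}{c_1} \step \withhandle{e}{c_1'}$ arising from $c_1 \step c_1'$, one inverts the big-step derivation of the contracted term: in each of the two possible shapes the inner premise is a big-step judgement about $c_1'$, to which the induction hypothesis applies, producing the corresponding judgement about $c_1$ and hence a big-step derivation for the original term.

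\emph{Main obstacle.} There is no conceptual difficulty; the bulk is bookkeeping over the two mutually symmetric rule sets. The one point that genuinely needs care is stating the auxiliary lemmas with exactly the right universal quantifiers, since a too-weak formulation makes the nested inductions fail: the $\kord{let}$/$\kord{handle}$ congruence lemmas must range over the continuation body and over the target result $r$, and the single-step expansion lemma must be stated for an arbitrary result $r$ so that it threads cleanly through the inversion steps in the congruence cases.
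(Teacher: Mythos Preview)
Your proposal is correct and matches the paper's approach: the paper simply states that ``both directions of the equivalence proceed by a routine induction'' and defers the details to the Twelf formalization in \texttt{small-big.elf}. You have spelled out exactly the structure such a formalization must take---induction on $\eval$ with $\step^*$-congruence lemmas for $\kord{let}$ and $\kord{handle}$ in one direction, and induction on $\step^*$ with a single-step head-expansion lemma (itself proved by induction on $\step$ with inversion in the congruence cases) in the other---so there is nothing to add.
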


\begin{proof}
  Both directions of the equivalence proceed by a routine induction. The formalized proofs
  of the two implications can be found in the file \verb|small-big.elf|.
\end{proof}


\section{An effect system}
\label{sec:effect-system}

\subsection{Subtyping}
\label{sub:subtyping}

As in most effect systems,
we need to take care of the \emph{poisoning problem}~\cite{wansbrough1999once}.
For example, what should be the type of~$\mathit{ignore}$ in
\begin{align*}
  &\letin{\mathit{ignore} = \val (\fun{msg} \val \unt)} \\
  &\letin{f = \conditional{b}{(\val ignore)}{(\val \hash{\kord{std}}{\kord{write}}})} \\
  &\val \mathit{ignore}
\end{align*}
assuming we have the ground type $\type{string}$ and a boolean expression $b$?
If we give it the desired type~$\type{string} \to \unitty \E \purely$ then there is a type mismatch between the branches in the conditional statement, whereas the dirty type $\type{string} \to \unitty \E \set{\hash{\kord{std}}{\kord{write}}}$ loses the valuable knowledge that $\mathit{ignore}$ is a pure function. The simplest antidote to the poisoning problem is subtyping so that $\mathit{ignore}$ may be given the function type with empty dirt which is coerced in the conditional statement to a supertype that matches the other branch.

For our purposes, a straightforward variant of \emph{structural} subtyping~\cite{fuh1990type} suffices. We have subtyping of pure types $A \leq A'$ and of dirty types $\C \leq \C$, given by the rules
{\small\begin{mathpar}
  \inferrule{
  }{
    \boolty \le \boolty
  }

  \inferrule{
  }{
    \natty \le \natty
  }

  \inferrule{
  }{
    \unitty \le \unitty
  }

  \inferrule{
  }{
    \emptyty \le \emptyty
  }

  \inferrule{
    A' \le A \\
    \C \le \C'
  }{
    A \to \C \le A' \to \C'
  }

  \inferrule{
    \Rgn \subseteq \Rgn'
  }{
    E^\Rgn \le E^{\Rgn'}
  }

  \inferrule{
    \C' \le \C \\
    \D \le \D'
  }{
    \C \hto \D \le \C' \hto \D'
  }

  \inferrule{
    A \le A' \\
    \Drt \subseteq \Drt'
  }{
    A \E \Drt \le A' \E \Drt'
  }
\end{mathpar}
}%
It is easily checked that reflexivity and transitivity of subtyping are admissible.
Apart from resolving the poisoning problem, subtyping allows us to better deduce the behavior of handlers. Consider the computation
\begin{align*}
  &\letin{u = \val \iota} \\
  &\letin{v = (\conditional{b}{\val u}{\val \iota'})} {} \\
  &\letin{h = \val (\handler \val x \mapsto \cdots \effcase \call{u}{\op}{x}{k} \mapsto c)} \\
  &\quad \cdots
\end{align*}
Without subtyping we are forced to give both $u$ and $v$ the type $E^{\set{\iota,
    \iota'}}$. Therefore, by looking at the type of $u$ we cannot tell whether $h$ handles
$\hash{\iota}{\op}$ or $\hash{\iota'}{\op}$, and so we must assume that both may be
unhandled by $h$. With subtyping we may give $u$ the type $E^{\set{\iota}}$
which makes it clear that $h$ handles $\hash{\iota}{\op}$.

\subsection{Typing rules}
\label{sub:typing-rules}

There are two typing judgments,
\begin{equation*}
  \ctx \ent e \T A
  \qquad\text{and}\qquad
  \ctx \ent c \T \C
\end{equation*}
stating that an expression $e$ has a pure type $A$ and a computation $c$ has a dirty type
$\C$, respectively. Here $\ctx$ is a typing context of the form $x_1 : A_1, \ldots, x_n : A_n$. There is also a third, auxiliary typing judgment
\begin{equation*}
  \ctx \ent \ocs \T \C / \Drt  
\end{equation*}
which states that operation cases~$\ocs$ all have the same outgoing type $\C$ and are
guaranteed to handle operations in $\Drt$. 
Most of the typing rules in Figure~\ref{fig:typing} are standard, except for:
\begin{description}
\item[\rulename{Inst}]
  we check that $\iota$ is one of instances in~$\Rgn$, which in turn must be contained in $\iotas_E$.
\item[\rulename{Hand}]
  to check that a handler has type $A \E \Drt \hto B \E \Drt'$, we verify that it
  converts a computation of type $A \E \Drt$ to one of type $B \E \Drt'$, which involves
  checking three premises.
  First, the value case must take a value of type $A$ to a computation of type $B \E \Drt'$.
  Second, all operation cases must have outgoing type $B \E \Drt'$.
  Third, every operation in $\Drt$ is either guaranteed to be handled by $\ocs$, or is contained in $\Drt'$.
\item[\rulename{OpCases-Nil}, \rulename{OpCases-Cons}]
  the auxiliary typing judgment verifies that the operation cases have the given outgoing type, and that
  they cover the given dirt. The empty list $\ocsnil$ does not cover anything and has any outgoing type.
  The rule \rulename{OpCases-Cons} checks the first operation case, checks the others inductively, and verifies 
  that $\Drt \subseteq \Drt' \cupdot \hash{R}{\op}$, where
  \[
    \Drt' \cupdot \hash{\Rgn}{\op} \defeq
      \begin{cases}
        \Drt' \cup \set{\hash{\iota}{\op}} &\text{if $\Rgn = \set{\iota}$,}\\
        \Drt' & \text{otherwise.}
      \end{cases}
  \]
  The idea is that we can be sure that an operation case handles $\hash{\iota}{\op}$ only
  when the type of its instance is of the form $E^{\set{\hash{\iota}{\op}}}$.
\item[\rulename{Op}]
  we first check that $e$ and $\op$ belong to the same effect.
  Then we check that $\Drt$ covers not just all possible operations that the operation call may cause
  (recall that $\Rgn$ may contain more than one instance), but also any operations in the continuation~$c$.
  We may assume that $c$ has the same dirt, as we can use \rulename{SubComp} otherwise.
  We use the same reasoning in rules \rulename{IfThenElse}, \rulename{Match} and \rulename{Let}.
\item[\rulename{With}]
  handlers behave like functions from computations to computations.
\item[\rulename{SubExpr}, \rulename{SubComp}]
  these \emph{subsumption rules} allow us to always assign a bigger type.
\end{description}

\begin{figure}
\hrulefill
  \small
  \begin{mathpar}
  \inferrule[Var]{
    (x \T A) \in \ctx
  }{
    \ctx \ent x \T A
  }

  \inferrule[True]{
  }{
    \ctx \ent \tru \T \boolty
  }

  \inferrule[False]{
  }{
    \ctx \ent \fls \T \boolty
  }

  \inferrule[Zero]{
  }{
    \ctx \ent 0 \T \natty
  }

  \inferrule[Succ]{
    \ctx \ent e \T \natty
  }{
    \ctx \ent \succ e \T \natty
  }

  \inferrule[Unit]{
  }{
    \ctx \ent \unt \T \unitty
  }

  \inferrule[Fun]{
    \ctx, x \T A \ent c \T \C
  }{
    \ctx \ent \fun{x \T A} c \T A \to \C
  }

  \inferrule[Inst]{
    \iota \in \Rgn \subseteq \iotas_E
  }{
    \ctx \ent \iota \T E^\Rgn
  }

  \inferrule[Hand]{
    \ctx, x \T A \ent c_v \T B \E \Drt' \\
    \ctx \ent \ocs \T B \E \Drt' / \Drt'' \\
    \Drt \subseteq \Drt'' \cup \Drt'
  }{
    \ctx \ent (
      \handler
      \val x \T A \mapsto c_v \effcase \ocs
    ) \T A \E \Drt \hto B \E \Drt'
  }

  \inferrule[SubExpr]{
    \ctx \ent e \T A \\
    A \le A'
  }{
    \ctx \ent e \T A'
  }

  \medskip

  \inferrule[OpCases-Nil]{
  }{
    \ctx \ent \ocsnil_{\C} \T \C / \purely
  }

  \inferrule[OpCases-Cons]{
    \ctx \ent e \T E^\Rgn \\
    (\op \T A^\op \to B^\op) \in \sig_E \\\\
    \ctx, x \T A^\op, k \T B^\op \to \C \ent c \T \C \\
    \ctx \ent \ocs \T \C / \Drt' \\
    \Drt \subseteq \Drt' \cupdot \hash{R}{\op}
  }{
    \ctx \ent (\call{e}{\op}{x}{k} \mapsto c \effcase \ocs) \T \C / \Drt
  }

  \medskip

  \inferrule[IfThenElse]{
    \ctx \ent e \T \boolty \\
    \ctx \ent c_1 \T \C \\
    \ctx \ent c_2 \T \C
  }{
    \ctx \ent \conditional{e}{c_1}{c_2} \T \C
  }

  \inferrule[Match]{
    \ctx \ent e \T \natty \\
    \ctx \ent c_1 \T \C \\
    \ctx, x \T \natty \ent c_2 \T \C
  }{
    \ctx \ent \matchnat{e}{c_1}{x}{c_2} \T \C
  }

  \inferrule[Absurd]{
    \ctx \ent e \T \emptyty
  }{
    \ctx \ent \kord{absurd}_{\C} \; e \T \C
  }
  
  \inferrule[App]{
    \ctx \ent e_1 \T A \to \C \\
    \ctx \ent e_2 \T A
  }{
    \ctx \ent e_1 \, e_2 \T \C
  }

  \inferrule[Val]{
    \ctx \ent e \T A
  }{
    \ctx \ent \val e \T A \E \Drt
  }

  \inferrule[Op]{
    \ctx \ent e_1 \T E^\Rgn \\
    (\op \T A^\op \to B^\op) \in \sig_E \\\\
    \ctx \ent e_2 \T A^\op \\
    \ctx, y \T B^\op \ent c \T A \E \Drt \\
    \fra{\iota \in \Rgn} \hash{\iota}{\op} \in \Drt
  }{
    \ctx \ent \call{e_1}{\op}{e_2}{\cont{y}{c}} \T A \E \Drt
  }

  \inferrule[Let]{
    \ctx \ent c_1 \T A \E \Drt \\
    \ctx, x \T A \ent c_2 \T B \E \Drt
  }{
    \ctx \ent \letin{x = c_1} c_2 \T B \E \Drt
  }

 \inferrule[LetRec]{
    \ctx, f \T A \to \C, x \T A \ent c_1 \T \C \\
    \ctx, f \T A \to \C \ent c_2 \T \D
  }{
    \ctx \ent \letrecin{f \, x \T A \to \C = c_1} c_2 \T \D
  }

  \inferrule[With]{
    \ctx \ent e \T \C \hto \D \\
    \ctx \ent c \T \C
  }{
    \ctx \ent \withhandle{e}{c} \T \D
  }

  \inferrule[SubComp]{
    \ctx \ent c \T \C \\
    \C \le \C'
  }{
    \ctx \ent c \T \C'
  }
\end{mathpar}
\caption{The typing rules of core \Eff.}
\label{fig:typing}
\hrulefill
\end{figure}

The effect system is safe with respect to the operational semantics:

\begin{thm}[Progress \& Preservation]
  \hfill
  \begin{description}
    \item[Progress]
      If $\ent c \T A \E \Drt$ then either
      \begin{itemize}
        \item there exists a computation $c'$ such that $c \step c'$, or
        \item $c$ is of the form $\val e$ for some expression $e$, or
        \item $c$ is of the form $\call{\iota}{\op}{e}{\cont{x}{c'}}$ for some $\hash{\iota}{\op} \in \Drt$.
      \end{itemize}
    \item[Preservation]
      If $\ent c \T \C$ and $c \step c'$ then $\ent c' \T \C$.
  \end{description}
\end{thm}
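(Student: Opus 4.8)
The plan is the standard syntactic argument for type safety, namely progress and preservation by rule induction, with the usual adjustments for a system that has subtyping and an auxiliary judgement. First I would set up the routine structural lemmas, all by straightforward inductions on typing derivations. \emph{Weakening}: $\ctx \ent c \T \C$ implies $\ctx, x \T A \ent c \T \C$, and likewise for expressions and for $\ocs$. \emph{Narrowing}: $\ctx, x \T A \ent c \T \C$ together with $A' \le A$ gives $\ctx, x \T A' \ent c \T \C$, the crucial step being \rulename{Var} followed by \rulename{SubExpr}. \emph{Substitution}: $\ctx, x \T A \ent c \T \C$ and $\ctx \ent e \T A$ imply $\ctx \ent c[e/x] \T \C$, and similarly for the expression and $\ocs$ judgements. \emph{Inversion}: since \rulename{SubExpr} and \rulename{SubComp} make the system non-syntax-directed, I would prove, for each term former, the expected inversion principle modulo subtyping --- for instance $\ctx \ent \fun{x \T A} c \T B$ forces $B$ to be a function type $A' \to \D$ with $A' \le A$ and $\ctx, x \T A \ent c \T \D'$ for some $\D' \le \D$, using that subtyping never alters the outermost shape of a type. \emph{Canonical forms}: for a closed well-typed expression $e$, inversion together with the unavailability of \rulename{Var} in the empty context shows that the type fixes the shape of $e$ --- it is $\tru$ or $\fls$ at $\boolty$, some $\kord{succ}^n\,0$ at $\natty$, $\unt$ at $\unitty$, a function abstraction at a function type, an instance $\iota \in \Rgn$ at $E^\Rgn$, a $\kord{handler}$ at a handler type, and there is no closed expression at $\emptyty$.

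The one genuinely effect-specific ingredient is a lemma about the dispatch operation $\ocs_{\hash{\iota}{\op}}(-,-)$, proved by induction on the derivation of $\ctx \ent \ocs \T B \E \Drt' / \Drt''$: if $(\op \T A^\op \to B^\op) \in \sig_E$, $\iota \in \iotas_E$, $\ctx \ent e \T A^\op$, $\ctx \ent \kappa \T B^\op \to B \E \Drt'$, and $\hash{\iota}{\op} \in \Drt' \cup \Drt''$, then $\ctx \ent \ocs_{\hash{\iota}{\op}}(e, \kappa) \T B \E \Drt'$. In the $\ocsnil$ base case $\Drt'' = \purely$, hence $\hash{\iota}{\op} \in \Drt'$, and the re-raised call $\call{\iota}{\op}{e}{\cont{y}{\kappa \, y}}$ checks by \rulename{Op} using $\ctx \ent \iota \T E^{\set{\iota}}$. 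In the cons case: if the head clause matches $\hash{\iota}{\op}$ we conclude by two applications of the substitution lemma, using that an operation symbol determines its effect so that the parameter and result types agree; if it does not match, a short computation with the $\cupdot$ side condition of \rulename{OpCases-Cons} shows that $\hash{\iota}{\op}$ still lies in $\Drt' \cup \Drt'''$, where $\Drt'''$ is the guaranteed dirt of the tail, so the induction hypothesis applies.

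With these in place, \textbf{Progress} is an induction on the derivation of $\ent c \T A \E \Drt$. The \rulename{SubComp} case is immediate, since passing to a supertype only enlarges $\Drt$. In each syntax-directed case canonical forms rules out every would-be stuck state: the scrutinees of \rulename{IfThenElse} and \rulename{Match} are literal constants, so those computations step; the operator of an \rulename{App} is an abstraction; \rulename{Absurd} cannot occur at all, $\emptyty$ being uninhabited in the empty context; \rulename{Val} is already terminal; for \rulename{Op} canonical forms makes the operator a concrete instance $\iota \in \Rgn$, and the premise $\fra{\iota' \in \Rgn} \hash{\iota'}{\op} \in \Drt$ then gives $\hash{\iota}{\op} \in \Drt$, so $c$ is in the third terminal form; and \rulename{Let}, \rulename{LetRec} and \rulename{With} appeal to the induction hypothesis on the subcomputation, with canonical forms in the \rulename{With} case turning the handled expression into an actual $\kord{handler}$ so that one of the three $\kord{handle}$ rules fires (the side condition $(\op \T A^\op \to B^\op) \in \sig_E$ of the dispatch rule holds because the propagated operation call was itself well-typed). \textbf{Preservation} is an induction on $c \step c'$: the two congruence rules follow from the induction hypothesis after inverting \rulename{Let} resp.\ \rulename{With}; the $\beta$-rule, the two $\kord{match}$ rules, the $\kord{val}$-binding rule, and the recursive-function unfolding follow by inversion plus substitution; the $\kord{let}$-commutation rule is handled by inverting \rulename{Let} and \rulename{Op}, re-deriving the operation call at the ambient dirt via \rulename{SubComp}, and re-assembling with \rulename{Op} and \rulename{Let} after weakening $c_2$ under the bound variable; the $\kord{handle}$-on-$\kord{val}$ rule follows by inverting \rulename{With}, \rulename{Hand} (whose handler-subtyping premises align the ingoing and outgoing dirts) and \rulename{Val}, plus substitution; and the $\kord{handle}$-dispatch rule follows from the lemma above, once inversion of \rulename{Op} places $\hash{\iota}{\op}$ in the ingoing dirt and inversion of \rulename{Hand} (whose premise is $\Drt \subseteq \Drt'' \cup \Drt'$) transfers it to $\Drt' \cup \Drt''$, with the continuation $\fun{y \T B^\op} \withhandle{h}{c}$ typed at $B^\op \to B \E \Drt'$ by \rulename{Fun} and \rulename{With}.

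The step I expect to be the real work is this last one --- the preservation case for $\withhandle{h}{(\call{\iota}{\op}{e}{\cont{y}{c}})} \step \ocs_{\hash{\iota}{\op}}(e, \fun{y \T B^\op} \withhandle{h}{c})$ and the dispatch lemma it rests on. The delicate point, which is precisely the non-monotonicity the effect system is designed to capture, is the bookkeeping between the three dirts in play: the ingoing dirt, the outgoing dirt $\Drt'$, and the \emph{advertised} dirt $\Drt''$ that $\ocs$ is merely guaranteed to handle. In particular, as the $\cupdot$ of \rulename{OpCases-Cons} and the premise $\Drt \subseteq \Drt'' \cup \Drt'$ of \rulename{Hand} together reveal, $\ocs$ may actually handle an operation that it does not advertise (because the matching instance was typed with a non-singleton region); but then the \rulename{Hand} premise forces that operation into the outgoing dirt $\Drt'$, which is exactly what makes the re-raised call in the $\ocsnil$ branch of the dispatch typecheck. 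The remaining cases are routine, and indeed these three judgements, the operational semantics, and this theorem have been machine-checked in Twelf.
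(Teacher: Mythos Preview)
Your proposal is correct and follows essentially the same approach as the paper. The paper's own proof is a single sentence deferring both statements to inductions that have been mechanized in Twelf (\texttt{progress.elf} and \texttt{preservation.elf}); your sketch spells out what such a formalization must contain --- the structural lemmas, inversion modulo subsumption, canonical forms, and the dispatch lemma for $\ocs_{\hash{\iota}{\op}}$ --- and your identification of the handle-dispatch step as the one non-routine case, with its interplay between the ingoing dirt, the outgoing dirt $\Drt'$, and the advertised dirt $\Drt''$ via the $\cupdot$ side condition, is exactly right.
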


\begin{proof}
  Both statements are proved by induction.
  The formalized proofs can be found in the files \verb|progress.elf| and \verb|preservation.elf|.
\end{proof}

\begin{cor}[Safety]
  A terminating computation of type $A \E \Drt$ returns a value of type~$A$, or calls an operation in~$\Drt$.
\end{cor}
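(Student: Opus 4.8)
The plan is to read the corollary directly off the Progress \& Preservation theorem, once we pin down what ``terminating'' means. By the proposition equating the small-step and big-step semantics, a computation $c$ terminates exactly when there is a result $r$ --- i.e.\ $r$ of the form $\val e$ or $\call{\iota}{\op}{e}{\cont{x}{c'}}$ --- with $c \eval r$, equivalently $c \step^* r$; unfolding $\step^*$, this means there is a finite reduction chain $c = c_0 \step c_1 \step \cdots \step c_n = r$ whose last term $r$ takes no further step.

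First I would propagate the typing through this chain: from $\ent c \T A \E \Drt$, applying Preservation $n$ times gives $\ent c_i \T A \E \Drt$ for every $i$, and in particular $\ent r \T A \E \Drt$. Then I would apply Progress to $r$. Since $r$ cannot step, the first alternative of Progress is excluded, so either $r = \val e$ for some expression $e$, or $r = \call{\iota}{\op}{e}{\cont{x}{c'}}$ with $\hash{\iota}{\op} \in \Drt$. The latter is precisely the assertion that $c$ calls an operation in $\Drt$, so that case needs nothing further.

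In the remaining case $r = \val e$ I must strengthen $\ent \val e \T A \E \Drt$ to $\ent e \T A$. For this I would establish a small inversion lemma: the only rules that can conclude a judgement whose subject is $\val e$ are \rulename{Val} and \rulename{SubComp}, so any derivation of $\ent \val e \T A \E \Drt$ consists of a (possibly empty) stack of \rulename{SubComp} steps on top of one \rulename{Val} step from $\ent e \T A'$ with $A' \E \Drt' \le A \E \Drt$; collapsing the stack using admissibility of transitivity of subtyping and reading off $A' \le A$ from the dirty-subtyping rule, a single application of \rulename{SubExpr} yields $\ent e \T A$. Hence $c$ returns a value of type $A$, and the dichotomy is established.

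I expect the only nonroutine ingredient to be that inversion-modulo-subsumption step for \rulename{Val}; it is straightforward given that reflexivity and transitivity of subtyping are admissible, as already remarked, and everything else is immediate from Progress, Preservation, and the equivalence of the two operational semantics.
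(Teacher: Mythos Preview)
Your proposal is correct and is precisely the expected elaboration: the paper states the corollary with no proof, treating it as immediate from Progress \& Preservation, and what you have written is the standard unpacking of that claim. The inversion-modulo-subsumption step for \rulename{Val} that you flag as the only nonroutine ingredient is indeed needed to extract $\ent e \T A$ from $\ent \val e \T A \E \Drt$, and your argument for it (peel off the stack of \rulename{SubComp} steps, collapse via transitivity, then apply \rulename{SubExpr}) is exactly right; the paper simply leaves this implicit.
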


\noindent
In particular, a terminating computation of type $A \E \purely$ does not call any operations and returns
a pure value of type $A$.

\begin{exa}
Let us see what the effect system tells us about Example~\ref{exa:reduction}.
We may give the reference $\iota \in \iotas_{\kord{ref}}$ type $\kord{ref}^{\set{\iota}}$.
Then the computation~$c$ has the dirty type
$\natty \E \set{\hash{\iota}{\kord{lookup}}, \hash{\iota}{\kord{update}}}$,
and the handler~$h$ the type
\[
  (\natty \E \set{\hash{\iota}{\kord{lookup}}, \hash{\iota}{\kord{update}}}) \hto (\unitty \E \set{\hash{\iota}{\kord{update}}})
\]
because it handles both $\kord{lookup}$ and $\kord{update}$, but then calls $\kord{update}$ in the value case.
This $\kord{update}$ also changes the type of computation from $\natty$ to $\unitty$.

If we give $\iota$ the less precise type $\kord{ref}^{\set{\iota, \iota'}}$,
the dirt of $c$ is 
\[
  \Drt \defeq \set{\hash{\iota}{\kord{lookup}}, \hash{\iota'}{\kord{lookup}}, \hash{\iota}{\kord{update}}, \hash{\iota'}{\kord{update}}}
\]
while the best type we can give to $h$ is $(\natty \E \Drt) \hto (\unitty \E \Drt)$.
Since $\set{\iota, \iota'}$ is not a singleton,
we cannot give any guarantees on what operations are handled.
\end{exa}

\subsection{Skeletal types}
\label{sub:skeletal-types}

We relate the pure and dirty types to ML-style types by an operation which erases all effect information to produce a \emph{skeletal} type. We will use these later to obtain a coherent semantics of types. The skeletal types are defined as follows:
\begin{align*}
  \text{Skeletal type}\ S, T &\bnfis
  \boolty \bnfor
  \natty \bnfor
  \unitty \bnfor
  \emptyty \bnfor
  S \to T \bnfor
  E \bnfor
  S \hto T
\end{align*}
There is no distinction between pure and dirty types anymore. The typing rules for skeletal types are like those for pure and dirty types with the effect information omitted, for instance
{\small
  \begin{mathpar}
    \inferrule[Inst']{
      \iota \in \iotas_E
    }{
      \ctx \ent \iota \T E
    }

    \inferrule[Hand']{
      \ctx, x \T A^s \ent c_v \T S \\
      \ctx \ent \ocs \T S
    }{
      \ctx \ent (
      \handler
      \val x \T A \mapsto c_v \effcase \ocs
      ) \T A^s \hto S
    }

    \inferrule[OpCases-Nil']{
    }{
      \ctx \ent \ocsnil_{\C} \T \C^s
    }
    
    \inferrule[OpCases-Cons']{
      \ctx \ent e \T E \\
      (\op \T A^\op \to B^\op) \in \sig_E \\
      \ctx, x \T (A^\op)^s, k \T (B^\op)^s \to S \ent c \T S \\
      \ctx \ent \ocs \T S
    }{
      \ctx \ent (\call{e}{\op}{x}{k} \mapsto c \effcase \ocs) \T S
    }

    \inferrule[Op']{
      \ctx \ent e_1 \T E \\
      (\op \T A^\op \to B^\op) \in \sig_E \\
      \ctx \ent e_2 \T (A^\op)^s \\
      \ctx, y \T (B^\op)^s \ent c \T S
    }{
      \ctx \ent \call{e_1}{\op}{e_2}{\cont{y}{c}} \T S
    }

  \end{mathpar}
}%
The remaining rules remain unchanged as they do not mention effects,
while subsumption rules are \emph{removed}. To every pure type $A$ and a dirty type $\C$ we assign their skeletal versions $A^s$ and $\C^s$, which are like $A$ and $\C$ with region and dirt removed. The skeletal version $\ctx^s$ of a typing context $\ctx$ is obtained by taking the skeletons of the types in $\ctx$.
We summarize the properties of skeletal types:

\begin{thm}
  \label{thm:skeletal}
  \parbox{0pt}{}
  \begin{enumerate}
  \item If $A \leq B$ and $\C \leq \D$ then $A^s = B^s$ and $\C^s = \D^s$.
  \item If
    \begin{equation*}
      \ctx \ent e : A
      \qquad\text{and}\qquad
      \ctx \ent c : \C
    \end{equation*}
    then
    \begin{equation*}
      \ctx^s \ent e : A^s
      \qquad\text{and}\qquad
      \ctx^s \ent c : \C^s.
    \end{equation*}
  \item In a given context, an expression and a computation has at most one skeletal type, with a unique typing derivation.
  \end{enumerate}
\end{thm}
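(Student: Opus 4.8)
The plan is to prove the three claims in order, since each feeds naturally into the next.

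\textbf{Part (1).} This is a routine structural induction on the derivation of $A \leq B$ (and simultaneously $\C \leq \D$). For the ground-type rules the claim is immediate since $\boolty^s = \boolty$, etc. For the rule deriving $A \to \C \leq A' \to \C'$ from $A' \leq A$ and $\C \leq \C'$, the induction hypotheses give $A^s = (A')^s$ and $\C^s = (\C')^s$, so $(A \to \C)^s = A^s \to \C^s = (A')^s \to (\C')^s = (A' \to \C')^s$. The rule for $E^\Rgn \leq E^{\Rgn'}$ is trivial because both skeletons are $E$, regardless of the regions. The handler-subtyping rule is handled just like the arrow rule, and the dirty-type rule $A \E \Drt \leq A' \E \Drt'$ reduces to the pure case since $(A \E \Drt)^s = A^s$. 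No obstacle here.

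\textbf{Part (2).} This is an induction on the typing derivation $\ctx \ent e : A$ (resp.\ $\ctx \ent c : \C$), mutually with the auxiliary judgement $\ctx \ent \ocs : \C / \Drt$, which maps to $\ctx^s \ent \ocs : \C^s$. For each non-subsumption rule of Figure~\ref{fig:typing} one checks that erasing effect information turns it into the corresponding skeletal rule: e.g.\ \rulename{Inst} becomes \rulename{Inst'} using $\iota \in \Rgn \subseteq \iotas_E \Rightarrow \iota \in \iotas_E$; \rulename{Hand} becomes \rulename{Hand'}, discarding the side condition $\Drt \subseteq \Drt'' \cup \Drt'$ and observing that $(A \E \Drt \hto B \E \Drt')^s = A^s \hto (B \E \Drt')^s$; \rulename{OpCases-Cons} becomes \rulename{OpCases-Cons'}, dropping the condition $\Drt \subseteq \Drt' \cupdot \hash{R}{\op}$; \rulename{Op} becomes \rulename{Op'}, dropping $\forall \iota \in \Rgn.\ \hash{\iota}{\op} \in \Drt$; and the binding rules go through because $(\,\cdot\E\Drt)$ on premises and conclusion all carry the same $\Drt$, which is erased uniformly. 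The two subsumption rules \rulename{SubExpr} and \rulename{SubComp} are exactly where part~(1) is needed: if the last step is \rulename{SubExpr} with $\ctx \ent e : A$ and $A \leq A'$, the induction hypothesis gives $\ctx^s \ent e : A^s$, and part~(1) gives $A^s = (A')^s$, so $\ctx^s \ent e : (A')^s$ with no skeletal rule applied at all — the derivation simply stays the same. This last observation is also what makes part~(3) possible.

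\textbf{Part (3).} Here uniqueness of the skeletal type is proved by induction on the \emph{term} (not a derivation), exploiting the typing annotations that were deliberately added to the syntax in Section~\ref{sec:terms}. The key point is that the skeletal system has \emph{no} subsumption rules, so it is syntax-directed: each term former is the conclusion of exactly one skeletal rule, and in that rule the skeletal types of all subterms are determined by the skeletal type of the whole term together with the annotations. For instance, $\fun{x \T A} c$ can only be typed by \rulename{Fun'}, forcing the type to have the shape $A^s \to T$ where $T$ is the unique skeletal type of $c$ in the extended context (by the IH); the annotation $A$ pins down the domain. Operation calls and handlers are the delicate cases: \rulename{Op'} and \rulename{OpCases-Cons'} require looking up $(\op \T A^\op \to B^\op) \in \sig_E$, but since an operation symbol belongs to at most one effect $E$, the effect $E$ (hence $A^\op$, $B^\op$) is uniquely determined by $\op$ once we know $e_1$ (resp.\ $e$) is typable at all; combined with the IH for the subterms this yields a unique type and derivation. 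The annotations on $\absurd_\C$ and $\ocsnil_\C$ similarly ensure uniqueness where the rules would otherwise be ambiguous.

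\textbf{Main obstacle.} The genuinely delicate point is part~(3): one must verify that every syntactic form carries \emph{enough} annotation that the syntax-directedness of the skeletal rules actually determines the type and the derivation uniquely — in particular checking the operation-related forms, where the effect signature is consulted, and confirming that "an operation symbol is associated with at most one effect" is exactly the hypothesis that makes the lookup deterministic. Parts (1) and (2) are straightforward inductions by comparison.
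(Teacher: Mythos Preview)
Your proposal is correct and follows essentially the same approach as the paper: part~(1) by induction on the subtyping derivation, part~(2) by mapping each typing rule to its skeletal counterpart and eliding the subsumption rules via part~(1), and part~(3) by the syntax-directedness of the skeletal system together with the typing annotations. The paper phrases part~(3) as ``by inversion'' rather than as an explicit induction on terms, but your formulation is just the more careful unpacking of that remark; your observation about the uniqueness of the effect~$E$ associated to an operation symbol is indeed the relevant ingredient there.
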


\begin{proof}
  The first statement holds by an induction on the derivation of $A \leq B$ and $\C \leq \D$.

  To prove the second statement, observe that
  a typing derivation may be mapped to the corresponding skeletal version rule by rule, except for subsumption rules. But these can be simply omitted from the typing derivations because $A \leq B$ and $\C \leq \D$ imply $A^s = B^s$ and $\C^s = \D^s$ by the first statement.

  The last statement holds by inversion: in any situation at most one skeletal typing rule applies in at most one way.
\end{proof}

A consequence of the theorem is that if a computation $c$ has dirty types $\C$ and $\D$ then $\C^s = \D^s$, hence $\C$ and $\D$ differ only in the effect information. An analogous property holds for expressions and pure types.


\section{Denotational semantics}
\label{sec:denotational-semantics}

We use standard domain theory to provide an adequate denotational semantics of core \Eff. We shall use $\omega$-cpos as domains, but presumably a different kind of domains could be used, as long as they support the standard constructions, in particular solutions of domain equations, and are amenable to Pitts's theory of minimal invariant properties~\cite{Pitts96}. We refer to~\cite{Amadio:1998} for background on domain theory and denotational semantics.

We define a \emph{predomain} to be a poset in which chains (ascending sequences) have suprema, while a \emph{domain} is a predomain with a least element~$\bot$. A \emph{continuous} map is a monotone map which commutes with suprema of chains. If $D$ is a predomain and $E$ is a domain the set $D \to E$ of all continuous maps forms a domain. The ordering on continuous maps is pointwise. A continuous map is \emph{strict} if it maps $\bot$ to $\bot$. The set $D \stricto E$ of strict maps between domains $D$ and $E$ forms a subdomain of $D \to E$.

\subsection{Computation domains}
\label{sub:computation-domains}

We first build domains that will serve as the meanings of computation types. Let $A$ be a predomain, $I$ an index set, and for each $i \in I$ let $A_i$ and $B_i$ be predomains. We seek a domain $\compdom$ satisfying the domain equation $\compdom \cong F(\compdom)$ where $F$ is the functor
\begin{equation*}
  \textstyle
  F(D) = \big(A + \coprod_{i \in I} A_i \times ({B_i} \to D)\big)_\bot.
\end{equation*}
Following~\cite{Pitts96} we work in the category of domains and strict maps, and take $\compdom$ to be a minimal solution in the sense that it possesses the \emph{minimal invariant property}. The usual limit-colimit construction~\cite[Chapter~7]{Amadio:1998} yields such a domain.  As domain equations go, this one is quite simple because $\compdom$ occurs only covariantly. The elements of $\compdom$ can be thought of as trees whose leaves are tagged with elements of~$A$ or~$\bot$, and whose nodes have branching types $B_i$ and are tagged with elements of~$A_i$. The trees need not be well founded.

The minimality of~$\compdom$ yields a recursion and an induction principle. The recursion principle says that for any domain $D$, a continuous map $f_\kord{val} : A \to D$, and continuous maps $f_i : A_i \times ({B_i} \to D) \to D$ for $i \in I$, there is a unique strict continuous map $f : \compdom \stricto D$ such that
\begin{align*}
  f(\inval(x)) &= f_{\kord{val}}(x) & &\text{for $x \in A$,}\\
  f(\inop{i}(y, \kappa)) &= f_i (y, f \circ \kappa)  & &\text{for $y \in A_i$ and $\kappa : B_i \to \compdom$.}
\end{align*}
The induction principle for $\compdom$ applies to \emph{admissible} predicates on $\compdom$, i.e., those that hold for $\bot$ and are closed under suprema of chains. Precisely, if $\phi$ is an admissible predicate on $\compdom$ such that
\begin{enumerate}
\item $\phi(\inval(x))$ for all $x \in A$, and
\item for all $i \in I$, $x \in A_i$, and $\kappa : B_i \to \compdom$, if $\fra{y \in B_i} \phi(\kappa(y))$ then $\phi(\inop{i}(x,\kappa))$,
\end{enumerate}
then $\phi(t)$ holds for all $t \in \compdom$.

For any $I$, the construction of a minimal solution $\compdom_I(A, (A_i)_i, (B_i)_i)$
from the input data $A$, $(A_i)_{i \in I}$, $(B_i)_{i \in I}$ forms a locally continuous functor
\begin{equation*}
  T_I :
  \mathbf{pCpo} \times \mathbf{pCpo}^I \times (\mathbf{pCpo}^\mathrm{op})^I
  \to
  \mathbf{Cppo}.
\end{equation*}
Here $\mathbf{Cppo}$ is the category of domains and strict continuous maps, while
$\mathbf{pCpo}$ is the category of predomains and partial continuous maps which are
defined on open subsets (an upper set which is inaccessible by suprema of chains). The
functor will appear later on in a larger system of recursive domain equations.

\subsection{Semantics of skeletal types}
\label{sub:semantics-skeletal-types}

We interpret pure and dirty types as predomains and domains, respectively. A typing context is interpreted as a cartesian product of predomains, and a typing judgment as a continuous map. However, typing judgments do not have unique derivations because of the subsumption rules, and so we have to worry about coherence. That is, when we define the meaning of a typing judgment by induction on its derivation, we need to make sure that the result does not depend on the choice of derivation.
We accomplish this by providing a semantics which factors through the skeletal types from Section~\ref{sub:skeletal-types}.

Let
\begin{equation*}
  \Omega = \set{\hash{\iota}{\op} \mid \exs{E} \iota \in \iotas_E \land \op \in \sig_E}
\end{equation*}
be the set of \emph{all} operations. If dirts could be infinite, $A \E \Omega$ would be a
dirty type expressing the fact that any effect could happen.

To each skeletal type $S$ we assign a predomain $\skele{S}$ and a domain $\skelc{S}$ as follows:
\begin{align*}
  \skele{\boolty} &= \set{\semfls, \semtru}, &
  \skele{\natty} &= \NN, \\
  \skele{\unitty} &= \set{\star}, &
  \skele{\emptyty} &= \purely, \\
  \skele{S \to T} &= \skele{S} \to {\skelc{T}}, &
  \skele{E} &= \iotas_E, \\
  \skele{S \hto T} &= \skelc{S} \stricto \skelc{T}, &
\end{align*}
and
\begin{equation*}
  \skelc{S} = \compdom_\Omega(\skele{S},
                       (\skele{(A^{\op})^s})_{\hash{\iota}{\op}}, 
                       (\skele{(B^{\op})^s})_{\hash{\iota}{\op}}).
\end{equation*}
These should be read as a system of domain and predomain equations indexed by the skeletal types. There are possible circularities in the system because the equation for $\skelc{S}$ refers to possibly larger skeletal types $(A^\op)^s$ and $(B^\op)^s$. As in the case of computation domains, we take the minimal solutions which enjoy the minimal invariant property.

To each pure type $A$ and dirty type $\C$ we assign a \emph{skeletal predomain} $\skel{A}$ and \emph{skeletal domain} $\skel{\C}$ by setting
\begin{equation*}
  \skel{A} = \skele{A^s}
  \qquad\text{and}\qquad
  \skel{\C} = \skelc{\C^s}.
\end{equation*}
The first part of Theorem~\ref{thm:skeletal} guarantees that $A \leq B$ and $\C \leq \D$ imply $\skel{A} = \skel{B}$ and $\skel{\C} = \skel{\D}$.

\subsection{Semantics of expressions and computations}
\label{sub:semant-expr-comp}

The meaning of a typing context $\ctx$
\begin{equation*}
  x_1 : A_1, \ldots, x_n : A_n
\end{equation*}
is
\begin{equation*}
  \skel{\ctx} =
  \skel{A_1} \times \cdots \times \skel{A_n}.
\end{equation*}
We interpret typing judgments
\begin{align*}
  \ctx \ent e : A
  \qquad\text{and}\qquad
  \ctx \ent c : \C
\end{align*}
as continuous maps
\begin{equation*}
  \xsem{\ctx \ent e : A} : \skel{\ctx} \to \skel{A}
  \qquad\text{and}\qquad
  \xsem{\ctx \ent c : \C} : \skel{\ctx} \to \skel{\C}.
\end{equation*}
When no confusion can arise we abbreviate these as $\xsem{e}$ and $\xsem{c}$.
The definition proceeds by induction on the derivation of the typing judgment.
Given an environment $\eta \in \skel{\Gamma}$, the base rules for expressions and the
successor rule are taken care of by
\begin{align*}
  \sem{\ctx \ent x_i : A_i}{\eta} &= \eta_i &
  \sem{\ctx \ent \fls : \boolty}{\eta} &= \semfls \\
  \sem{\ctx \ent \unt : \unitty}{\eta} &= \star  &
  \sem{\ctx \ent \tru : \boolty}{\eta} &= \semtru \\
  \sem{\ctx \ent 0 : \natty}{\eta} &= 0 &
  \sem{\ctx \ent \succ e : \natty}{\eta} &= (\sem{\ctx \ent e : \natty}{\eta}) + 1 \\
  \sem{\ctx \ent \iota : E^R}{\eta} &= \iota,
\end{align*}
and the abstraction rule by
\begin{equation*}
    \sem{\ctx \ent (\fun{x \T A} c) : A \to \C}{\eta} =
    \lambda a \in \skel{A} \,.\, \sem{\ctx, x : A \ent c : \C}{(\eta, a)}
\end{equation*}
For the handler rule we set
\begin{equation*}
  \sem{\ctx \ent (\handler \val x \T A \mapsto c_v \effcase \ocs) : A \E \Drt \hto B \E \Drt'} = h
\end{equation*}
where $h : \skel{\ctx} \to \skel{A \E \Drt} \stricto \skel{B \E \Drt'}$ is defined by recursion on $\skel{A \E \Drt}$:
\begin{align*}
  h(\eta)(\bot) &= \bot \\
  h(\eta)(\inval(a)) &= \sem{\ctx, x : A \ent c_v : B \E \Drt'}{(\eta, a)} \\
  h(\eta)(\inop{\hash{\iota}{\op}}(a, \kappa)) &=
  \xsem{\ocs}_{\hash{\iota}{\op}}(\eta, a, h(\eta) \circ \kappa)
\end{align*}
The auxiliary map
\begin{equation*}
  \xsem{\ocs}_{\hash{\iota}{\op}} :
  \skel{\Gamma} \times \skel{A^{\op}} \times \skel{B^{\op} \to B \E \Drt'} \to
  \skel{B \E \Drt'}
\end{equation*}
is defined by
\begin{multline*}
  \begin{aligned}
    \xsem{\ocsnil_{\C}}_{\hash{\iota}{\op}}(\eta, a, \kappa) &= \inop{\hash{\iota}{\op}}(a, \kappa) \\
    \xsem{\call{e'}{\op'}{x}{k} \mapsto c \effcase \ocs}_{\hash{\iota}{\op}}(\eta, a, \kappa) &=    
  \end{aligned} \\
  \begin{cases}
    \sem{\ctx, x : A^{\op}, k : B^{\op} \to B \E \Drt' \ent c : B \E \Drt'}{(\eta, a, \kappa)}
    &\text{if $\hash{(\sem{e'}{\eta})}{\op'} = \hash{\iota}{\op}$,}\\
    \xsem{\ocs}_{\hash{\iota}{\op}}(\eta, a, \kappa) &\text{otherwise.}
  \end{cases}
\end{multline*}
Finally, if $\ctx \ent e : A'$ is derived from the premises $\ctx \ent e: A$ and $A \leq A'$ by the subsumption rule, we set
\begin{equation*}
  \sem{\ctx \ent e : A'}{\eta} = \sem{\ctx \ent e : A}{\eta}.
\end{equation*}
The definition is meaningful because $A \leq A'$ implies $\skel{A} = \skel{A'}$.
The meaning of pure computations and operations is
\begin{align*}
  \sem{\ctx \ent \val e : A \E \Drt}{\eta} &= \inval(\sem{\ctx \ent e : A}) \\
  \sem{\ctx \ent \call{e_1}{\op}{e_2}{\cont{y}{c}} : A \E \Drt}{\eta} &=
  \inop{\hash{\sem{e_1}{\eta}}{\op}} (\sem{e_2}{\eta},
        \lambda b \in \skel{B^{\op}} \,.\, \sem{\ctx, y : B^{\op} \ent c : A \E \Drt}{(\eta,b)}).
\end{align*}
The meaning of elimination forms is
\begin{align*}
  \sem{\ctx \ent \withhandle{e}{c} : \D}{\eta} &=
     (\sem{\ctx \ent e : \C \hto \D}{\eta}) (\sem{\ctx \ent c : \C}{\eta}) \\
  \sem{\ctx \ent \conditional{e}{c_1}{c_2} : \C}{\eta} &=
  \begin{cases}
    \sem{\ctx \ent c_1 : \C}{\eta} & \text{if $\sem{\ctx \ent e : \boolty}{\eta} = \semtru$,}\\
    \sem{\ctx \ent c_2 : \C}{\eta} & \text{if $\sem{\ctx \ent e : \boolty}{\eta} = \semfls$}
  \end{cases}
  \\
  \sem{\ctx \ent \absurd e : \C}{\eta} &= \bot \\
  \sem{\ctx \ent e_1 \, e_2 : \C}{\eta} &= (\sem{\ctx \ent e_1 : A \to \C}{\eta}) (\sem{\ctx \ent e_2 : A})
\end{align*}
and
\begin{multline*}
    \sem{\ctx \ent (\matchnat{e}{c_1}{x}{c_2}) : \C}{\eta} = {} \\
  \begin{cases}
    \sem{\ctx \ent c_1 : \C}{\eta} & \text{if $\sem{\ctx \ent e : \natty}{\eta} = 0$,}\\
    \sem{\ctx, x : \natty \ent c_2 : \C}{(\eta,n)} & \text{if $\sem{\ctx \ent e: \natty}{\eta} = n + 1$.}
  \end{cases}
\end{multline*}
To give semantics of $\kord{let}$ binding, we first define the \emph{lifting} of a map $f
: \skel{A} \to \skel{B \E \Drt}$ to be the map $\lift{f} : \skel{A \E \Drt} \to \skel{B \E
  \Drt}$ defined recursively by
\begin{align*}
  \lift{f}(\bot) &= \bot, \\
  \lift{f}(\inval(x)) &= f(x), \\
  \lift{f}(\inop{\hash{\iota}{\op}}(x, \kappa)) &= \inop{\hash{\iota}{\op}}(x, \lift{f} \circ \kappa).
\end{align*}
Then we set
\begin{multline*}
  \sem{\ctx \ent \letin{x = c_1} c_2 : B \E \Drt}{\eta} = \\
  \lift{(\lambda a \in \skel{A} \,.\, \sem{\ctx, x : A \ent c_2 : B \E \Drt}{(\eta, a)})}
  (\sem{\ctx \ent c_1 : A \E \Drt}{\eta}).
\end{multline*}
The meaning of a recursive function definition is
\begin{equation*}
  \sem{\ctx \ent (\letrecin{f \, x \T A \to \C = c_1} c_2) : \C}{\eta} =
  \sem{\ctx, f : A \to \C \ent c_2 : \C}{(\eta, g)}
\end{equation*}
where $g : \skel{A} \to \skel{\C}$ is the least fixed-point of the map
\begin{equation*}
  g \mapsto (\lambda a \in \skel{A} \,.\, \sem{c_1}{(\eta, g, a)}).
\end{equation*}
Finally, just like for expressions, if $\ctx \ent c : \C'$ is derived from the premises $\ctx \ent c : \C$ and $\C \leq \C'$ by the subsumption rule, we set
\begin{equation*}
  \sem{\ctx \ent c : \C'}{\eta} = \sem{\ctx \ent c : \C}{\eta}.
\end{equation*}
This concludes the definition of denotational semantics of expressions and computations.

\begin{thm}[Coherence]
  All derivations of a typing judgment give it the same meaning.
\end{thm}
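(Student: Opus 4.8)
The plan is to reduce coherence to the already-established uniqueness of skeletal typing derivations (Theorem~\ref{thm:skeletal}(3)). The key observation is that the denotational semantics of $\ctx \ent e : A$ depends on the derivation only through its \emph{skeleton}: every non-subsumption typing rule has an exact counterpart among the skeletal rules, and the semantic clause attached to a typing rule is phrased entirely in terms of the skeletal predomains and domains $\skel{A} = \skele{A^s}$, $\skel{\C} = \skelc{\C^s}$, together with data (the instance $\iota$, the operation symbol $\op$, the signature entry $A^\op \to B^\op$) that survive erasure of effect information. The subsumption rules $\rulename{SubExpr}$ and $\rulename{SubComp}$ contribute nothing: they leave the denotation literally unchanged, which is legitimate precisely because $A \le A'$ forces $\skel{A} = \skel{A'}$ and $\C \le \C'$ forces $\skel{\C} = \skel{\C'}$ by Theorem~\ref{thm:skeletal}(1).

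Concretely, I would first define, for each typing derivation $\mathcal{D}$ of $\ctx \ent e : A$ (resp.\ $\ctx \ent c : \C$, resp.\ $\ctx \ent \ocs : \C / \Drt$), its skeletal image $\mathcal{D}^s$: delete every instance of $\rulename{SubExpr}$ and $\rulename{SubComp}$, and replace each remaining rule by the corresponding primed rule, using Theorem~\ref{thm:skeletal}(2) to see that the premises and conclusion are still derivable at the skeletal level. Then I would prove, by a straightforward induction on $\mathcal{D}$, that $\sem{\mathcal{D}}{}$ depends only on $\mathcal{D}^s$ — more precisely, that if $\mathcal{D}_1$ and $\mathcal{D}_2$ are two derivations with $\mathcal{D}_1^s = \mathcal{D}_2^s$ then $\sem{\mathcal{D}_1}{} = \sem{\mathcal{D}_2}{}$. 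In the induction, a subsumption step at the root of either derivation is peeled off without changing the denotation (and without changing the skeleton), reducing to the case where neither root is a subsumption rule; then the two roots must be the same (primed) rule with the same skeletal subderivations, and the induction hypothesis applied to the immediate subderivations closes the case. The handler case additionally needs that the recursively defined map $h$ and the auxiliary map $\xsem{\ocs}_{\hash{\iota}{\op}}$ are determined by the skeletal data, which is immediate from their defining equations and the recursion principle for $\skelc{S}$; the $\kord{let}$ and $\kord{letrec}$ cases similarly invoke uniqueness of the lifting $\lift{(-)}$ and of least fixed points.

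Finally, given two derivations $\mathcal{D}_1, \mathcal{D}_2$ of the \emph{same} judgment $\ctx \ent e : A$, Theorem~\ref{thm:skeletal}(2) gives derivations $\mathcal{D}_1^s, \mathcal{D}_2^s$ of $\ctx^s \ent e : A^s$, and Theorem~\ref{thm:skeletal}(3) forces $\mathcal{D}_1^s = \mathcal{D}_2^s$; the dependence lemma then yields $\sem{\mathcal{D}_1}{} = \sem{\mathcal{D}_2}{}$, and symmetrically for computations and operation cases.

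The main obstacle is bookkeeping rather than conceptual: making the correspondence between typing rules and skeletal rules precise enough that "same skeleton" is a well-defined notion, and checking that each semantic clause is genuinely insensitive to the erased effect annotations. The one place that deserves care is $\rulename{Hand}$: the semantic map $h$ is defined by recursion on $\skel{A \E \Drt} = \skelc{(A \E \Drt)^s}$, and one must confirm that this domain, the value-case clause, and each operation-case branch of $\xsem{\ocs}_{\hash{\iota}{\op}}$ depend only on the skeletons of $A$, $B$, and of the types occurring in $\ocs$ — so that two derivations differing only in the dirts $\Drt, \Drt', \Drt''$ (which are unconstrained by the skeletal system) still produce the same handler. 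Everything else is a routine rule-by-rule verification.
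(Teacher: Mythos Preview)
Your proposal is correct and follows essentially the same approach as the paper: both argue that the semantic clauses ignore effect information and subsumption rules, so the meaning of a derivation factors through its skeletal image, and coherence then reduces to the uniqueness of skeletal derivations established in Theorem~\ref{thm:skeletal}. Your version is simply a more detailed unpacking of what the paper states in a brief paragraph.
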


\begin{proof}
  The semantics of $\ctx \ent e : A$ and $\ctx \ent c : \C$ factor through the associated
  skeletal derivations $\ctx^s \ent e : A^s$ and $\ctx^s \ent c : \C^s$, respectively. This
  is so because the semantic rules given above clearly factor through the associated
  skeletal rules. In other words, they ignore the effect information and the subsumption
  rules. Uniqueness of meaning is thus a consequence of the uniqueness of skeletal
  derivations, cf.\ Theorem~\ref{thm:skeletal}.
\end{proof}

\subsection{Semantics of effects}

In the terminology of John Reynolds~\cite{reynolds00themeaning} the semantics given so far is \emph{intrinsic}, while the semantics of effects given below is \emph{extrinsic}. 
This is in accordance with our understanding that effect information is descriptive rather than prescriptive: a function $f$ of type $(A \to B \E \Drt) \to B' \E \Drt'$ should accept \emph{any} function $g$ of type $A \to B \E \Drt''$, even if $\Drt'' \not\subseteq \Drt$, although it has the \emph{property} that if $\Drt'' \subseteq \Drt$ then $f(g)$ calls only operations in $\Drt'$.

For each pure type $A$ and dirty type $\C$ we define subpredomains $\tsem{A} \subseteq \skel{A}$ and subdomains $\tsem{\C} \subseteq \skel{\C}$ of those elements that behave according to the effect information. The ground types are easy:
\begin{align*}
  \tsem{\boolty} &= \skel{\boolty}, &
  \tsem{\natty} &= \skel{\natty}, &
  \tsem{\unitty} &= \skel{\unitty}, \\
  \tsem{\emptyty} &= \skel{\emptyty}, &
  \tsem{E^R} &= R.
\end{align*}
For function types, handler types, and dirty types we would like to solve the following system of equations with unknowns $\tsem{A}$ and $\tsem{\C}$ where $A$ and $\C$ range over pure and dirty types, respectively:
\begin{align}
  \notag
  \tsem{A \to \C} &= \set{f \in \skel{A} \to \skel{\C} \such \fra{x \in \tsem{A}} f(x) \in \tsem{\C}}, \\
  \notag
  \tsem{\C \hto \D} &= \set{h \in \skel{\C} \stricto \skel{\D} \such \fra{t \in \tsem{\C}} h(t) \in \tsem{\D}}, \\
  \label{eq:tsem3}
  \tsem{A \E \Drt} &= \{t \in \skel{A \E \Drt} \such
    \begin{aligned}[t]
      & t = \bot \lor (\exs{x \in \tsem{A}} t = \inval(x)) \lor {}\\
      & \exs{\hash{\iota}{\op} \in \Drt, y \in \tsem{A^\op}, \kappa \in \skel{B^\op \to A \E \Drt}} {} \\
      & {}\quad (\fra{z \in \tsem{B^\op}} \kappa(b) \in \tsem{A \E \Drt}) \land
        t = \inop{\hash{\iota}{\op}}(y, \kappa)
      \}.
    \end{aligned}
\end{align}
The last rule says that $t \in \tsem{A \E \Drt}$ when it is $\bot$, or of the form $\inval(x)$ for some $x \in \tsem{A}$, or of the form $\inop{\hash{\iota}{\op}}(y, \kappa)$ for some $y \in \tsem{A^\op}$ and $\kappa \in \tsem{B^\op \to A \E \Drt}$.

The system is potentially problematic because the types $A^\op$ and $B^\op$ introduce circularities in the last equation. We apply Pitts's theorem~\cite[Theorem 4.16]{Pitts96} about existence of invariant relations to obtain a solution that satisfies an induction principle, see Theorem~\ref{thm:tsem-induction} below. For Pitts's theorem to apply we must verify that our conditions form an admissible action on admissible relations, as defined in~\cite[Definition 4.6]{Pitts96}. The relational structure in question is that of~\cite[Example~4.2(ii)]{Pitts96} for which the accompanying notion of admissibility is the one we are using, cf.~\cite[Example~4.5(ii)]{Pitts96}. The locally continuous functor is the evident one, while its admissible action on relations is read off~\eqref{eq:tsem3}. The admissibility of the action follows from \cite[Lemma~6.6]{Pitts96}: the first two actions in~\eqref{eq:tsem3} are considered explicitly, while the third one is a composition of actions from the cited lemma.
The solution so obtained possesses the following induction principle.

\begin{thm}
  \label{thm:tsem-induction}%
  Suppose $\phi$ is an admissible predicate on $\skel{A \E \Drt}$ such that
  \begin{enumerate}
  \item $\phi(\inval(a))$ for every $a \in \tsem{A}$, and
  \item for all $\hash{\iota}{\op} \in \Drt$, $a \in \tsem{A^\op}$, $\kappa \in \tsem{B^\op \to A \E \Drt}$,
    if $\fra{b \in \tsem{B^\op}} \phi(\kappa(b))$ then $\phi(\inop{\hash{\iota}{\op}}(a, \kappa))$.
  \end{enumerate}
  Then $\phi(t)$ for all $t \in \tsem{A \E \Drt}$.
\end{thm}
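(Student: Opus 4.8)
The plan is to derive Theorem~\ref{thm:tsem-induction} as the unary, single-type instance of the relational induction principle that comes packaged with Pitts's minimal invariant relations~\cite{Pitts96} --- the very apparatus set up just above. Recall that each skeletal domain is a \emph{minimal} solution of its domain equation, so the identity map on $\skel{A \E \Drt}$ (indeed on every $\skele{S}$ and $\skelc{S}$) is the least fixed point of the associated functional; equivalently $\mathrm{id}_{\skel{A \E \Drt}} = \bigsqcup_n e_n$ for the increasing chain of idempotent ``finite-depth'' maps $e_n$, with $e_0 = \bot$ and $e_{n+1}$ obtained by applying the functor to $e_n$. Concretely, on $\skel{A \E \Drt} = \compdom_\Omega(\skel{A}, (\skele{(A^\op)^s})_{\hash{\iota}{\op}}, (\skele{(B^\op)^s})_{\hash{\iota}{\op}})$ the map $e_{n+1}$ sends $\inval(x)$ to $\inval(e_n x)$ and $\inop{\hash{\iota}{\op}}(y, \kappa)$ to $\inop{\hash{\iota}{\op}}(e_n y, e_n \circ \kappa)$, with $e_n$ read at the appropriate skeletal type. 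The structural fact supplied by Pitts's theorem --- beyond mere existence of a solution to~\eqref{eq:tsem3} --- is that the family $\tsem{\cdot}$ is an \emph{invariant} relation, so each $\tsem{S}$ is a sub-(pre)domain closed under the corresponding $e_n$.

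With this in hand, fix an admissible predicate $\phi$ on $\skel{A \E \Drt}$ satisfying (1) and (2). First I would show, by induction on $n$, that $\phi(e_n(t))$ holds for every $t \in \tsem{A \E \Drt}$. For $n = 0$ we have $e_0(t) = \bot$ and $\phi(\bot)$ holds since $\phi$ is admissible. For the step, since $t \in \tsem{A \E \Drt}$, equation~\eqref{eq:tsem3} offers three cases. If $t = \bot$, then $e_{n+1}(t) = \bot$. If $t = \inval(x)$ with $x \in \tsem{A}$, then $e_{n+1}(t) = \inval(e_n x)$ with $e_n x \in \tsem{A}$ by invariance, so~(1) gives $\phi(e_{n+1}(t))$. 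If $t = \inop{\hash{\iota}{\op}}(y, \kappa)$ with $\hash{\iota}{\op} \in \Drt$, $y \in \tsem{A^\op}$ and $\kappa(b) \in \tsem{A \E \Drt}$ for all $b \in \tsem{B^\op}$, then $e_{n+1}(t) = \inop{\hash{\iota}{\op}}(e_n y, e_n \circ \kappa)$; here $e_n y \in \tsem{A^\op}$ and $(e_n \circ \kappa)(b) = e_n(\kappa(b)) \in \tsem{A \E \Drt}$ for $b \in \tsem{B^\op}$ by invariance, while $\phi((e_n \circ \kappa)(b))$ holds for every such $b$ by the induction hypothesis applied to $\kappa(b) \in \tsem{A \E \Drt}$, so~(2) gives $\phi(e_{n+1}(t))$. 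This finishes the inner induction. Finally, for $t \in \tsem{A \E \Drt}$ the $e_n(t)$ form an increasing chain with $\bigsqcup_n e_n(t) = (\bigsqcup_n e_n)(t) = t$, and $\phi$ holds of each term, so admissibility of $\phi$ yields $\phi(t)$.

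The main obstacle is purely one of \emph{correctly quoting} Pitts rather than of calculation: one must read off from~\eqref{eq:tsem3} that the induced action on relations is an admissible action on admissible relations --- which, as already noted above, reduces to \cite[Lemma~6.6]{Pitts96}, the action on $A \to \C$ and $\C \hto \D$ being standard and the one on $A \E \Drt$ a composition of actions covered there --- and then extract the consequences of the resulting relation being invariant, in particular its closure under the projection chain $e_n$. That closure is exactly what keeps the case analysis above inside $\tsem{\cdot}$ at every depth, including at the net-contravariant sub-occurrences of $A \E \Drt$ where a naive ``$\phi \cap \tsem{\cdot}$'' argument would break. Once these points are pinned down, the proof is the routine projection-chain induction sketched here; in fact, once the relational induction principle of~\cite{Pitts96} is in place, Theorem~\ref{thm:tsem-induction} is simply its specialization to unary relations and to the single type $A \E \Drt$.
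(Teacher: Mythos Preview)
Your proposal is correct and rests on the same foundation as the paper's proof: both derive the result from Pitts's minimal-invariant-relation machinery already set up for~\eqref{eq:tsem3}. The presentation differs only in how much is left to the citation. The paper's proof is a two-liner: extend $\phi$ to be constantly true on every type other than $A \E \Drt$ and then invoke \cite[Theorem~6.5]{Pitts96} directly, with \cite[Lemma~6.6]{Pitts96} supplying the required admissible action. You instead unfold that black box, running the projection-chain induction $\phi(e_n(t))$ by hand and using invariance of the family $\tsem{\cdot}$ under the $e_n$ to stay inside the semantic predicates at the auxiliary types $A$, $A^\op$, $B^\op$. This is exactly what Theorem~6.5 of Pitts packages, so the two arguments are interchangeable; yours has the merit of being self-contained, while the paper's is shorter and makes the dependence on the general theory explicit. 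One minor point of notation: in the mixed-variance setting the step map on $\inop{\hash{\iota}{\op}}(y,\kappa)$ should in general also precompose $\kappa$ with the projection at $B^\op$, but since you use invariance of $\tsem{B^\op}$ anyway this does not affect the argument.
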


\begin{proof}
  Because we defined $\tsem{{-}}$ mutually for all types, we first extend $\phi$ to be
  constantly true on types other than $A \E \Drt$. Then our theorem becomes an instance of
  the induction property \cite[Theorem~6.5]{Pitts96}. The admissible action required for
  the application of the theorem is obtained using \cite[Lemma~6.6]{Pitts96}.
\end{proof}

The semantics of effects and the semantics of expressions and computations fit together:

\begin{thm}
  If $\ctx \ent e : A$ and $\ctx \ent c : \C$ then for all $\eta \in \tsem{\ctx} = \tsem{A_1} \times \cdots \times \tsem{A_n}$, we have
  \begin{equation*}
    \sem{\ctx \ent e : A}{\eta} \in \tsem{A}
    \qquad\text{and}\qquad
    \sem{\ctx \ent c : \C}{\eta} \in \tsem{C}.
  \end{equation*}
\end{thm}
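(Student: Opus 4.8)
The plan is to prove both statements simultaneously by structural induction on typing derivations, carried out for the three judgments $\ctx\ent e:A$, $\ctx\ent c:\C$ and $\ctx\ent\ocs:\C/\Drt$ together; for the operation-cases judgment the right inductive statement is lemma~\textbf{(O)} below. Two ingredients do the real work. The first is the induction principle of Theorem~\ref{thm:tsem-induction}. The second is that each $\tsem{\C}$ is a \emph{subdomain} of $\skel{\C}$, and hence an admissible subset --- it contains $\bot$ and is closed under suprema of chains --- which is built into the Pitts-style construction of~\cite{Pitts96}; consequently, whenever $g$ is a continuous map into $\skel{\C}$, the predicate $\set{t\such g(t)\in\tsem{\C}}$ is admissible. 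Alongside this I would record two side lemmas. \textbf{(M)~Monotonicity:} $A\le B$ implies $\tsem{A}\subseteq\tsem{B}$, and $\C\le\D$ implies $\tsem{\C}\subseteq\tsem{\D}$; this is an induction on the subtyping derivation, where the arrow and handler cases are immediate from the defining equations (contravariance works out), $E^\Rgn\le E^{\Rgn'}$ is just $\Rgn\subseteq\Rgn'$, and $A\E\Drt\le B\E\Drt'$ follows from Theorem~\ref{thm:tsem-induction} applied to $\phi(t)\equiv t\in\tsem{B\E\Drt'}$ on $\skel{A\E\Drt}$ (which equals $\skel{B\E\Drt'}$ by Theorem~\ref{thm:skeletal}), using $\Drt\subseteq\Drt'$ and the disjunction~\eqref{eq:tsem3}. \textbf{(L)~Lifting:} if $f:\skel{A}\to\skel{B\E\Drt}$ sends $\tsem{A}$ into $\tsem{B\E\Drt}$, then $\lift f$ sends $\tsem{A\E\Drt}$ into $\tsem{B\E\Drt}$; again this is Theorem~\ref{thm:tsem-induction} applied to $\phi(t)\equiv\lift f(t)\in\tsem{B\E\Drt}$, the value node discharged by the hypothesis on $f$ and the operation node by~\eqref{eq:tsem3}. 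Finally, proved as part of the main induction: \textbf{(O)} if $\ctx\ent\ocs:B\E\Drt'/\Drt''$ and $\eta\in\tsem{\ctx}$, then for every operation $\hash{\iota}{\op}$ with $(\op\T A^\op\to B^\op)\in\sig_E$ and $\hash{\iota}{\op}\in\Drt''\cup\Drt'$, every $a\in\tsem{A^\op}$, and every $\kappa\in\tsem{B^\op\to B\E\Drt'}$, we have $\xsem{\ocs}_{\hash{\iota}{\op}}(\eta,a,\kappa)\in\tsem{B\E\Drt'}$.

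In the main induction the base cases are immediate: variables use $\eta\in\tsem{\ctx}$; ground constants use $\tsem{\boolty}=\skel{\boolty}$ and its analogues; and for an instance $\iota$ typed by \rulename{Inst}, $\sem{\iota}{\eta}=\iota\in\Rgn=\tsem{E^\Rgn}$. The cases \rulename{Fun}, \rulename{App}, \rulename{IfThenElse}, \rulename{Match}, \rulename{Absurd} (where $\sem{\absurd e}{\eta}=\bot\in\tsem{\C}$ by the subdomain property) and \rulename{With} (unfold the definition of $\tsem{\C\hto\D}$) follow directly from the induction hypotheses. \rulename{Op} uses~\eqref{eq:tsem3}: by induction $\sem{e_1}{\eta}\in\tsem{E^\Rgn}=\Rgn$, hence $\hash{\sem{e_1}{\eta}}{\op}\in\Drt$ by the premise $\fra{\iota\in\Rgn}\hash{\iota}{\op}\in\Drt$, while the parameter and the continuation land in the right sets by induction. \rulename{Let} is exactly lemma~\textbf{(L)} applied to $f=\lambda a.\,\sem{c_2}{(\eta,a)}$. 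For \rulename{LetRec}, $\tsem{A\to\C}$ is an admissible subset of $\skel{A\to\C}$ that is mapped into itself by the continuous functional whose least fixed point is $g$ (using the hypothesis for $c_1$), so $g\in\tsem{A\to\C}$ by fixed-point induction, and the hypothesis for $c_2$ finishes the case. The subsumption rules \rulename{SubExpr} and \rulename{SubComp} are discharged by lemma~\textbf{(M)}.

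The crux is \rulename{Hand} together with lemma~\textbf{(O)}. Writing $h(\eta)$ for the semantic handler, I would show $h(\eta)(t)\in\tsem{B\E\Drt'}$ for all $t\in\tsem{A\E\Drt}$ by Theorem~\ref{thm:tsem-induction} on $\phi(t)\equiv h(\eta)(t)\in\tsem{B\E\Drt'}$, which is admissible because $h(\eta)$ is strict continuous and $\tsem{B\E\Drt'}$ a subdomain. The value node reduces to the hypothesis for $c_v$ at the environment $(\eta,a)\in\tsem{\ctx,x\T A}$; the operation node, after unfolding $h(\eta)(\inop{\hash{\iota}{\op}}(a,\kappa))=\xsem{\ocs}_{\hash{\iota}{\op}}(\eta,a,h(\eta)\circ\kappa)$, reduces to lemma~\textbf{(O)}, whose hypothesis $\hash{\iota}{\op}\in\Drt''\cup\Drt'$ is supplied by the \rulename{Hand} premise $\Drt\subseteq\Drt''\cup\Drt'$ and whose remaining hypotheses are exactly those of the induction step ($a\in\tsem{A^\op}$ directly, and $h(\eta)\circ\kappa\in\tsem{B^\op\to B\E\Drt'}$ from $\fra{b\in\tsem{B^\op}}\phi(\kappa(b))$). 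Lemma~\textbf{(O)} itself is an induction on $\ocs$: for $\ocsnil$ the covered dirt is $\purely$, so $\hash{\iota}{\op}\in\Drt'$, and $\xsem{\ocsnil}_{\hash{\iota}{\op}}(\eta,a,\kappa)=\inop{\hash{\iota}{\op}}(a,\kappa)\in\tsem{B\E\Drt'}$ by~\eqref{eq:tsem3}; for \rulename{OpCases-Cons} with head clause $\call{e'}{\op'}{x}{k}\mapsto c$, if the clause fires then $\op'=\op$ (so parameter and result types match) and the value is $\sem{c}{(\eta,a,\kappa)}\in\tsem{B\E\Drt'}$ by the hypothesis for $c$, while if it does not fire we pass to the tail $\ocs_0$ after checking $\hash{\iota}{\op}\in\Drt_0\cup\Drt'$ with $\Drt_0$ the tail's covered dirt: if $\Rgn$ is not a singleton then $\Drt''\subseteq\Drt_0$, and if $\Rgn=\set{\iota_0}$ then the hypothesis for $e'$ gives $\sem{e'}{\eta}=\iota_0$, which --- the clause not having fired --- rules out $\hash{\iota}{\op}=\hash{\iota_0}{\op'}$, so again $\hash{\iota}{\op}\in\Drt_0$.

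I expect this interaction to be the main obstacle: pushing the semantic handler, which is defined by \emph{recursion} on the computation-domain structure, through the \emph{extrinsic} induction principle, and in particular making the case analysis of lemma~\textbf{(O)} line up with the syntactic side condition $\Drt\subseteq\Drt'\cupdot\hash{\Rgn}{\op}$ that governs when a clause is guaranteed to handle an operation. A secondary, purely bureaucratic obstacle is to check that every predicate fed to Theorem~\ref{thm:tsem-induction} and every set used in a fixed-point induction is admissible; this is routine once one knows that the $\tsem{\C}$ are subdomains and that $h(\eta)$, $\lift f$, and the \rulename{LetRec} functional are continuous.
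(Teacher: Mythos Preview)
Your proof is correct and follows essentially the same approach as the paper: induction on the typing derivation, with your lemma~\textbf{(O)} being exactly the paper's auxiliary claim about $\ocs$, and your case analysis on whether a clause fires (and the $\Rgn$-singleton check) matching the paper's argument. You are more explicit than the paper in isolating the monotonicity lemma~\textbf{(M)} needed for the subsumption rules, the lifting lemma~\textbf{(L)} needed for \rulename{Let}, and the appeal to Theorem~\ref{thm:tsem-induction} to push the recursively defined handler through $\tsem{A\E\Drt}$ --- all of which the paper sweeps under ``all cases are easy'' or ``follows immediately.''
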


\begin{proof}
  The proof proceeds by induction on the derivation of the typing judgment. All cases are easy, except for the typing rule
  for a handler
  \begin{equation*}
    \ctx \ent (\handler \val x \mapsto c_v \effcase \ocs) : A \E \Drt \hto B \E \Drt'
  \end{equation*}
  First, we claim that if the auxiliary judgment
  \begin{equation*}
    \ctx \ent \ocs \T B \E \Drt' / \Drt_c
  \end{equation*}
  is derivable then
  $\xsem{\ocs}_{\hash{\iota}{\op}}(\eta, a, \kappa) \in \tsem{B \E \Drt'}$
  for all $\hash{\iota}{\op} \in \Drt' \cup \Drt_c$, $\eta \in \tsem{\ctx}$, $a \in
  \tsem{A^\op}$, and $\kappa : \tsem{B^\op \to B \E \Drt'}$. Assuming the claim has been
  established, the above handler rule follows immediately.
  
  The proof of the claim proceeds by induction on the derivation of the auxiliary
  judgment. For $\ocsnil$, we have $\Drt_c = \purely$ and the claim obviously holds.
  The other possibility is
  \begin{equation*}
    \ctx \ent (\call{e}{\op'}{x}{k} \mapsto c \effcase \ocs) : B \E \Drt'/ \Drt_c'
  \end{equation*}
  for some $\ctx \ent \ocs : B \E \Drt' / \Drt_c$ and $\Drt_c' \subseteq (\Drt_c \cupdot \hash{R}{\op})$.
  Define $\iota' = \sem{\ctx \ent e : E^R}{\eta}$ and consider two cases. First, if
  $\hash{\iota'}{\op'} = \hash{\iota}{\op}$ then
  \begin{equation*}
    \xsem{\call{e}{\op'}{x}{k} \mapsto c \effcase \ocs}_{\hash{\iota}{\op}}(\eta, a, \kappa) =
    \xsem{\ctx, x : A^\op, k : B^\op \to B \E \Drt' \ent c : B \E \Drt'}(\eta, a, \kappa),
  \end{equation*}
  and we may apply the induction hypothesis (for the whole theorem) to~$c$. Second,
  suppose $\hash{\iota'}{\op'} \neq \hash{\iota}{\op}$. Then the assumption
  $\hash{\iota}{\op} \in \Drt' \cup \Drt_c'$ implies
  $\hash{\iota}{\op} \in \Drt' \cup \Drt_c$. Indeed, if $\hash{\iota}{\op} \in \Drt'$ there
  is nothing to prove, and if $\hash{\iota}{\op} \in \Drt_c' \subseteq (\Drt_c \cupdot \hash{R}{\op})$,
  then either $R$ is not a singleton and $\Drt_c \cupdot \hash{R}{\op} = \Drt_c$, or $R =
  \set{\hash{\iota'}{\op'}}$ and $\hash{\iota}{\op} \not\in R$. In any case, it follows
  that
  \begin{equation*}
    \xsem{\call{e}{\op'}{x}{k} \mapsto c \effcase \ocs}_{\hash{\iota}{\op}}(\eta, a, \kappa) =
    \xsem{\ocs}_{\hash{\iota}{\op}}(\eta, a, \kappa)
  \end{equation*}
  and $\ctx \ent \ocs : B \E \Drt'/\Drt_c$. Thus we may apply the induction hypothesis.  
\end{proof}

\subsection{Soundness and adequacy}
\label{sub:soundness-adequacy}

The soundness and adequacy theorems state that operational semantics and denotational
semantics fit together. One is an easy induction, while the other is proved using standard
technique of formal approximation relations, e.g., see~\cite[Theorem~6.3.6]{Amadio:1998}.

\begin{thm}[Soundness]
  If $\ent c : \C$ and $c \step c'$ then $\xsem{\ent c : \C} = \xsem{\ent c' : \C}$.
\end{thm}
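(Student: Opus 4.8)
The plan is to argue by induction on the derivation of $c \step c'$, with a case analysis on the small-step rule applied at the root. By the Preservation part of the Progress \& Preservation theorem, $c'$ again has type $\C$, so both sides of the claimed equation are defined; and by the Coherence theorem their values do not depend on the chosen typing derivations, so for each rule it suffices to produce one pair of derivations of $\ent c \T \C$ and $\ent c' \T \C$ whose meanings agree.

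The one indispensable ingredient is a substitution lemma: if $\ctx, x \T A \ent c \T \C$ and $\ctx \ent e \T A$, then $\ctx \ent c[e/x] \T \C$ and $\sem{\ctx \ent c[e/x] \T \C}{\eta} = \sem{\ctx, x \T A \ent c \T \C}{(\eta, \sem{\ctx \ent e \T A}{\eta})}$ for all $\eta \in \skel{\ctx}$, together with its analogue for expressions. This is a routine simultaneous induction on typing derivations, using an equally routine weakening lemma and, in the $\kord{let}$ and handler cases, the fact that $\lift{(-)}$ and the recursively defined handler map are genuine continuous functions of their arguments. I expect this bookkeeping to be the most laborious, if not the most conceptually demanding, part of the proof.

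With the substitution lemma in hand, every reduction that merely unfolds a defining clause is immediate. For the reductions of $\kord{if}$, $\kord{match}$, application, $\letin{x = \val e} c \step c[e/x]$, and the value case of $\withhandle{h}{c}$, one evaluates the relevant clause of the denotational definition and applies the substitution lemma; for instance $\sem{(\fun{x} c)\,e}{\eta} = (\lambda a.\,\sem{c}{(\eta,a)})(\sem{e}{\eta}) = \sem{c}{(\eta,\sem{e}{\eta})} = \sem{c[e/x]}{\eta}$. For $\letrecin{f\,x = c_1} c_2$ one additionally uses that the least fixed point $g$ appearing in its semantics equals $\lambda a.\,\sem{c_1}{(\eta,g,a)} = \sem{\fun{x} \letrecin{f\,x=c_1} c_1}{\eta}$, so the substitution lemma again closes the gap. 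For $\letin{x = \call{\iota}{\op}{e}{\cont{y}{c_1}}} c_2 \step \call{\iota}{\op}{e}{\cont{y}{\letin{x = c_1} c_2}}$ one computes both sides and observes that the clause $\lift{f}(\inop{\hash{\iota}{\op}}(x,\kappa)) = \inop{\hash{\iota}{\op}}(x, \lift{f} \circ \kappa)$ has exactly the shape of the right-hand side.

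The two congruence rules $\letin{x = c_1} c_2 \step \letin{x = c_1'} c_2$ and $\withhandle{e}{c} \step \withhandle{e}{c'}$ follow from the induction hypothesis applied to the premise $c_1 \step c_1'$ (resp.\ $c \step c'$), since by inversion of the relevant typing rule the semantics of the redex applies a fixed continuous map to that of the smaller computation, namely $\lift{(\lambda a.\,\sem{c_2}{(\eta,a)})}$ (resp.\ $\sem{e}{\eta}$). The last case, $\withhandle{h}{\call{\iota}{\op}{e}{\cont{y}{c}}} \step \ocs_{\hash{\iota}{\op}}(e, \fun{y} \withhandle{h}{c})$, is the one that genuinely exploits the recursive definition of the handler semantics. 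Writing $\kappa \defeq \lambda b.\,\sem{c}{(\eta,b)}$ and letting $h$ also denote the handler's interpretation, the left-hand side is $h(\eta)(\inop{\hash{\iota}{\op}}(\sem{e}{\eta},\kappa)) = \xsem{\ocs}_{\hash{\iota}{\op}}(\eta, \sem{e}{\eta}, h(\eta) \circ \kappa)$ by the defining clause of $h$, and a short induction on $\ocs$ shows that $\xsem{\ocs}_{\hash{\iota}{\op}}$ performs the same first-match selection as the operational $\ocs_{\hash{\iota}{\op}}$ (the $\ocsnil$ base case reproducing the propagated call $\call{\iota}{\op}{e}{\cont{y}{\withhandle{h}{c}}}$); the substitution lemma and the identity $h(\eta) \circ \kappa = \sem{\fun{y} \withhandle{h}{c}}{\eta}$ then match the two sides. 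I would flag this handler case, and more pervasively the task of stating the substitution lemma precisely in the presence of the recursively defined lifting and handler maps, as where the real work lies; everything else is mechanical.
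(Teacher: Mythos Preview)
Your proposal is correct and follows the same approach as the paper: an induction on the derivation of $c \step c'$, checking that each rule preserves the denotation. The paper's own proof is simply a one-line remark that one walks through all the rules for $\step$ and verifies they preserve meaning, so you have in fact supplied considerably more detail than the paper does---including the substitution lemma, the use of Coherence and Preservation to justify working with a single convenient derivation, and the case analysis for handlers---all of which is sound and is precisely what the paper's terse proof is implicitly relying on.
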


\begin{proof}
  We just have to walk through all the defining rules for $\step$ and verify that they do
  indeed preserve the meaning of $c$.
\end{proof}

To tackle adequacy, see Corollary~\ref{cor:adequacy} below, we define formal approximation
relations $\aprx{A}$ and $\aprx{\C}$ whose intuitive meaning is that an element of
$\skel{A}$ or $\skel{\C}$ approximates a closed term of type $A$ or $\C$. Given $d \in
\skel{A}$ and a closed expression $e : A$, we define $d \aprx{A} e$ by:
\begin{align*}
  d \aprx{\boolty} e &\iff (d = \semtru \land e = \tru) \lor (d = \semfls \land e = \fls) \\
  d \aprx{\natty} e &\iff d = n \land e = \kord{succ}^n \, 0 \\
  d \aprx{\unitty} e &\iff d = \star \land e = \unt \\
  d \aprx{A \to \C} e &\iff \fra{d', e'} (d' \aprx{A} e' \Rightarrow d(d') \aprx{\C} e \, e') \\
  d \aprx{E^R} e &\iff d = e \\
  d \aprx{\C \hto \D} e &\iff \fra{d', c} (d' \aprx{\C} c \Rightarrow d(d') \aprx{\D} (\withhandle{e}{c}))
\end{align*}
Simultaneously we define when $d \in \skel{A \E \Drt}$ approximates a closed computation $c
: A \E \Drt$, where $d \aprx{A \E \Drt} c$ holds when
\begin{itemize}
\item $d = \bot$, or
\item $d = \inval(d')$, $c \eval \val e$ and $d' \aprx{A} e$, or
\item $d = \inop{\hash{\iota}{\op}}(d', \kappa)$, $c \eval \call{\iota}{\op}{e}{\cont{y}{c'}}$, $d' \aprx{A^\op} e$,
  and if $d'' \aprx{B^\op} e'$ then $\kappa(d'') \aprx{A \E \Drt} c'[e''/y]$.
\end{itemize}
The definition of $\aprx{A \E \Drt}$ refers to types $A^\op$ and $B^\op$ which are
possibly larger than $A$, thus we again use \cite[Theorem~4.16]{Pitts96} to establish
existence of minimal such $\aprx{A}$ and $\aprx{\C}$, much like for the recursive types
considered in~\cite[Section~5]{Pitts96}.

\begin{lem}
  \label{lem:approx-step}%
  If $d \aprx{A \E \Drt} c'$ and $c \step c'$ then $d \aprx{A \E \Drt} c$.
\end{lem}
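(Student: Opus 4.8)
The plan is to observe that $\aprx{A \E \Drt}$ constrains a computation only through its evaluation behavior, so it is automatically stable under taking a reduction step. Being a minimal invariant relation, $\aprx{A \E \Drt}$ satisfies its defining clauses as a fixed-point equation: $d \aprx{A \E \Drt} c$ holds precisely when $d = \bot$, or $d = \inval(d')$ with $c \eval \val e$ and $d' \aprx{A} e$ for some $e$, or $d = \inop{\hash{\iota}{\op}}(d', \kappa)$ with $c \eval \call{\iota}{\op}{e}{\cont{y}{c''}}$ together with the accompanying approximation conditions on $d'$, $\kappa$ and $c''$. I would unfold this characterization for the hypothesis $d \aprx{A \E \Drt} c'$, transport the evaluation facts along the step $c \step c'$, and refold to obtain $d \aprx{A \E \Drt} c$.

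Concretely, assume $d \aprx{A \E \Drt} c'$ and $c \step c'$, and argue by cases on the clause witnessing the hypothesis. The case $d = \bot$ is immediate. If $d = \inval(d')$ with $c' \eval \val e$ and $d' \aprx{A} e$, then the proposition relating the big-step and small-step semantics gives $c' \step^* \val e$; applying the last rule for $\step^*$ to $c \step c'$ and $c' \step^* \val e$ yields $c \step^* \val e$, hence $c \eval \val e$. Reusing the same witness $e$, the second clause now holds for $c$, so $d \aprx{A \E \Drt} c$. The case $d = \inop{\hash{\iota}{\op}}(d', \kappa)$ is entirely analogous: from $c' \eval \call{\iota}{\op}{e}{\cont{y}{c''}}$ we obtain $c \eval \call{\iota}{\op}{e}{\cont{y}{c''}}$ in the same way, and the remaining conditions on $d'$, $\kappa$ and $c''$ are copied verbatim, since the reduction step changes only the computation and leaves $d$ and all expression-level relations $\aprx{A^\op}$, $\aprx{B^\op}$ untouched.

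The one point requiring a moment's care — and the only place where something could go wrong — is to resist reasoning by induction on the (generally non-well-founded) relation $\aprx{A \E \Drt}$. None is needed: the recursive occurrences of $\aprx{A \E \Drt}$ inside the defining clauses are threaded through unchanged, so a single unfolding of the fixed point suffices, with no appeal to Pitts's induction principle. Modulo this observation the lemma is just the remark that $c \step c'$ and $c' \step^* r$ give $c \step^* r$, combined with the equivalence of the two operational semantics.
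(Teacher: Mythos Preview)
Your proposal is correct and follows essentially the same argument as the paper: case analysis on the shape of $d$, using that $c \step c'$ and $c' \eval r$ imply $c \eval r$, and carrying the remaining approximation data through unchanged. You spell out the bridge via $\step^*$ and the big-step/small-step equivalence more explicitly than the paper does, and you add the useful remark that no induction on the relation is required, but the substance is identical.
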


\begin{proof}
  There is nothing to prove if $d = \bot$. The other two cases
  follow because $c \step c'$ and $c' \eval r$ imply $c \eval r$. For instance, if $d
  \aprx{A \E \Drt} c'$ holds because $d = \inval(d')$, $c \eval \val e$ and $d' \aprx{A}
  e$ for some $d'$ and $e$, then $d \aprx{A \E \Drt} c$ holds because $c \step c'$ implies
  $c \eval \val e$ and so we may reuse $d'$ and $e$.
\end{proof}

\begin{lem}
  \label{lem:approx-sup}%
  The relations $\aprx{A}$ and $\aprx{\C}$ are closed under suprema of chains in the first argument.
  The relations $\aprx{\C}$ relate $\bot$ to every computation.
\end{lem}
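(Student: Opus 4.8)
The plan is to read both assertions off the way the relations were constructed. Recall that $\aprx{A}$ and $\aprx{\C}$ were defined simultaneously, for all pure and dirty types, as the minimal invariant relation produced by \cite[Theorem~4.16]{Pitts96} inside the relational structure of \cite[Example~4.2(ii)]{Pitts96}, whose accompanying notion of admissibility is \cite[Example~4.5(ii)]{Pitts96}: an \emph{admissible} relation over a predomain $D$ (here some $\skel{A}$) relating $D$ to the fixed set of closed terms is, by definition, closed under suprema of chains in its first component, and, when $D$ is in fact a domain (here each $\skel{\C}$, since $\skel{\C} = \skelc{\C^s} = \compdom_\Omega(\cdots)$ is pointed), it relates $\bot$ to every closed term. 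The relation delivered by Pitts's theorem lies in this class, so chain-closedness of $\aprx{A}$ and of $\aprx{\C}$, as well as the membership of $(\bot,c)$ in every $\aprx{\C}$, are immediate from what ``invariant relation'' means here. The second assertion is moreover directly visible as the first clause of the defining equivalence for $\aprx{A \E \Drt}$, which the invariant relation satisfies.

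For a reader who wants to see why the relational action stays within the admissible relations, one checks the clauses one at a time. At a function type the clause presents $\aprx{A \to \C}$ as the intersection, over all $(d',e')$ with $d' \aprx{A} e'$, of the sets $\{\, d \such d(d') \aprx{\C} e\,e' \,\}$; each of these is the preimage, under the continuous evaluation map $d \mapsto d(d')$, of the set of elements related by $\aprx{\C}$ to the fixed closed term $e\,e'$, hence chain-closed once $\aprx{\C}$ is, and an intersection of chain-closed sets is chain-closed. The handler clause is identical, with $e\,e'$ replaced by the closed computation $\withhandle{e}{c}$. At a dirty type $A \E \Drt$ the clause presents $d \aprx{A \E \Drt} c$ as a case distinction on the shape of $d$: either $d = \bot$, or $d = \inval(d')$ with $c \eval \val e$ and $d' \aprx{A} e$, or $d = \inop{\hash{\iota}{\op}}(d', \kappa)$ with $c \eval \call{\iota}{\op}{e}{\cont{y}{c'}}$ and the stated conditions on $d'$ and $\kappa$. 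Here one uses that $\skel{A \E \Drt}$ has the form $\compdom_\Omega(\cdots)$, so that a non-$\bot$ element retains its shape under the order: along any chain $(d_n)$ the shape is eventually constant, the witnessing result $r$ with $c \eval r$ is then fixed, and chain-closedness of $\aprx{A \E \Drt}$ at $(d_n)$ reduces to chain-closedness of $\aprx{A}$ (if the eventual shape is $\inval$), or of $\aprx{A^\op}$ together with pointwise chain-closedness of $\aprx{A \E \Drt}$ in the continuation argument (if it is $\inop{\hash{\iota}{\op}}$). The ground-type cases are trivial, as the corresponding predomains are discrete.

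The one delicate point is exactly this last clause: a union of chain-closed relations need not be chain-closed, and the argument works only because the coproduct-and-lifting shape of the computation domains prevents a chain from passing between the three cases except by leaving $\bot$. This is precisely the content of the admissibility of the relational action, which was already verified — through \cite[Lemma~6.6]{Pitts96} — when the relations $\aprx{A}$, $\aprx{\C}$ were asserted to exist; so in the write-up it suffices to appeal to that admissibility, and to observe that $\bot$-containment is simply the first clause of the definition of $\aprx{A \E \Drt}$. Everything else is bookkeeping.
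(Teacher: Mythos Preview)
Your proof is correct and essentially subsumes the paper's argument. The paper proceeds by a direct case analysis on the type (calling it ``induction on the type''), checking chain-closedness at each type former: flat ground types, pointwise suprema at arrow and handler types, and the three-way shape analysis at a dirty type, invoking determinism of $\eval$ to fix the witnessing result. Your second paragraph reproduces exactly this content. The difference is framing: you first observe that the lemma is immediate from the construction via \cite[Theorem~4.16]{Pitts96}, since the relations are \emph{by definition} members of the admissible class of \cite[Example~4.5(ii)]{Pitts96}, and then offer the case analysis only as an explanation of why the relational action preserves admissibility. This is the cleaner presentation: the paper's phrase ``induction on the type'' is awkward because in the $\inop{\hash{\iota}{\op}}$ case one needs chain-closedness of $\aprx{A^\op}$, and $A^\op$ is not a structural subterm of $A \E \Drt$; your framing makes clear that this is not a structural induction but the verification that a functorial action on relations restricts to the admissible ones, with $\aprx{A^\op}$ appearing as one of the \emph{input} relations assumed admissible. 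Both arguments are the same under the hood; yours names the abstraction that makes the circularity harmless.
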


\begin{proof}
  The second statement holds by the definition of $\aprx{\C}$.
  For the first statement we proceed by induction on the type. The base cases hold because the
  predomains for ground types are flat. For $A \to \C$, suppose $(d_n)_n$ is a chain in
  $\skel{A \to \C}$ and $d_n \aprx{A \to \C} e$ for all~$n$. Consider any $d', e'$ such
  that $d' \aprx{A} e'$. Then $d_n(d') \aprx{\C} e \, e'$ for all~$n$.
  Suprema in $\skel{A \to \C}$ are defined pointwise, so we can use the induction hypothesis for $\C$ and get
  \begin{equation*}
    \textstyle
    (\bigvee_n d_n)(d') = \bigvee_n d_n(d') \aprx{\C} e \, e',
  \end{equation*}
  hence $\bigvee_n d_n \aprx{\C} e$ as desired. Handler types are treated similarly.
  Consider a computation type $A \E \Drt$, a closed computation $c : A \E \Drt$ and a chain $(d_n)_n$ in $\skel{A \E \Drt}$
  such that $d_n \aprx{\C} c$ for all~$n$. There are three kinds of chains in $\skel{A \E
    \Drt}$:
  \begin{itemize}
  \item $(d_n)_n$ is constantly $\bot$: then $\bigvee_n d_n = \bot \aprx{\C} c$.
  \item for large enough $n$ there are $d_n'$ such that $d_n = \inval(d_n')$: then
    $(d_n')_n$ form a chain in $\skel{A}$. Because operational semantics is deterministic,
    there exists a single $e$ such that $c \eval \val e$, and $d_n' \aprx{A} e$. By
    induction hypothesis for~$A$ we have $\bigvee_n d_n' \aprx{A} e$, from which
    $\bigvee_n d_n \aprx{A \E \Drt} c$ follows because $\bigvee_n d_n = \inval (\bigvee_n
    d_n')$.
  \item there are $\iota$ and $\op$ such that for large enough $n$ there are $d_n'$ and
    $\kappa_n'$ such that $d_n = \inop{\hash{\iota}{\op}}(d_n', \kappa_n)$: this case is
    treated analogously to the previous one. \qedhere
  \end{itemize}
\end{proof}

\begin{lem}
  \label{lem:adequacy-general}%
  Let $\ctx$ be the context $x_1 : A_1, \ldots, x_n : A_n$. Suppose that for each $1 \leq
  i \leq n$ we have $d_i \in \skel{A_i}$ and a closed expression $\ent e_i \T A_i$ such
  that $d_i \aprx{A_i} e_i$.
  \begin{enumerate}
  \item If $\ctx \ent e : A$ then $\sem{\ctx \ent e : A}{(d_1, \ldots, d_n)} \aprx{A} e[e_1/x_1, \ldots, e_n/x_n]$.
  \item If $\ctx \ent c : \C$ then $\sem{\ctx \ent c : \C}{(d_1, \ldots, d_n)} \aprx{\C} c[e_1/x_1, \ldots, e_n/x_n]$.
  \end{enumerate}
\end{lem}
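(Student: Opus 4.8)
The plan is to prove (1) and (2) \emph{simultaneously, by induction on the derivation of the typing judgment}. Throughout, write $\sigma = [e_1/x_1, \ldots, e_n/x_n]$ for the substitution of the approximating expressions, $\eta = (d_1, \ldots, d_n)$ for the corresponding environment, and $t\sigma$ for the application of $\sigma$ to a term $t$; we use tacitly the routine substitution lemmas (which let us fuse $\sigma$ with the per-rule substitutions), the determinism of $\eval$, and the agreement of small- and big-step semantics. The easy cases use nothing beyond the definitions and the induction hypothesis: $\rulename{Var}$ is the hypothesis $d_i \aprx{A_i} e_i$; $\rulename{True}$, $\rulename{False}$, $\rulename{Zero}$, $\rulename{Succ}$, $\rulename{Unit}$, $\rulename{Inst}$ are read off the corresponding clauses of $\aprx{\boolty}, \aprx{\natty}, \aprx{\unitty}, \aprx{E^\Rgn}$; $\rulename{Val}$ and $\rulename{Op}$ follow from the $\inval$ and $\inop$ clauses of $\aprx{A \E \Drt}$ once the induction hypothesis is applied to the subexpressions and (for $\rulename{Op}$) to the continuation body in the context extended by $y \T B^\op$, noting that $\sem{e_1}{\eta} = e_1\sigma$ because $\aprx{E^\Rgn}$ is equality; $\rulename{App}$ and $\rulename{With}$ follow by instantiating the definitions of $\aprx{A \to \C}$ and $\aprx{\C \hto \D}$ with the meanings supplied by the induction hypothesis; and the two subsumption rules reduce to monotonicity of $\aprx{-}$ along subtyping ($A \leq A' \Rightarrow {\aprx{A}} \subseteq {\aprx{A'}}$, and likewise for dirty types), a separate fact proved by induction on the subtyping derivation, with the minimal-invariant principle handling the recursive clause for dirty types.

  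The next group are the elimination forms whose denotational clause mirrors an operational reduction: $\rulename{Fun}$ (reduce $(\fun{x \T A} c\sigma)\,e' \step (c\sigma)[e'/x]$), $\rulename{IfThenElse}$ and $\rulename{Match}$ (reduce the $\kord{if}$ or $\kord{match}$ on the value determined by the induction hypothesis for the scrutinee), $\rulename{Absurd}$ (nothing to prove, since $\bot$ approximates every computation by Lemma~\ref{lem:approx-sup}), and $\rulename{LetRec}$. In each of these we apply the induction hypothesis to the immediate subterms and then absorb the finitely many steps that $(\cdot)\sigma$ performs on its way to the redex whose meaning we computed, by iterating Lemma~\ref{lem:approx-step} along the congruence rules for $\step$. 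The only additional work is in $\rulename{LetRec}$: after one unfolding step it remains to show $g \aprx{A \to \C} V$, where $g$ is the least fixed point of $G \defeq \lambda g'.\, \lambda a.\, \sem{c_1}{(\eta, g', a)}$ and $V \defeq \fun{x} \letrecin{f \, x = c_1\sigma}{c_1\sigma}$; since $g = \bigvee_m G^m(\bot)$ and $\aprx{A \to \C}$ is closed under suprema of chains in its first argument (Lemma~\ref{lem:approx-sup}), this follows by an inner induction on $m$, the base case because $\bot$ approximates everything, the step using the induction hypothesis on $c_1$ together with two applications of Lemma~\ref{lem:approx-step}.

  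The two genuinely delicate cases are $\rulename{Let}$ and $\rulename{Hand}$, whose denotations are defined by recursion over the computation domain $\skel{A \E \Drt} = \compdom_\Omega(\ldots)$ rather than by structural recursion on a term, so that we must argue by the induction principle that comes with the minimal formal approximation relation $\aprx{A \E \Drt}$ (the analogue for $\aprx{-}$ of Theorem~\ref{thm:tsem-induction}, extracted from \cite[Theorem~6.5]{Pitts96} just as in Theorem~\ref{thm:tsem-induction}). It reduces a goal of the shape ``$d \aprx{A \E \Drt} c$ implies $G(d) \aprx{\C'} H(c)$'' --- whose defining relation $\set{(d,c) \such G(d) \aprx{\C'} H(c)}$ is admissible because $G$ is continuous and $\aprx{\C'}$ is admissible (Lemma~\ref{lem:approx-sup}) --- to the three cases $d = \bot$ (trivial, $G$ strict), $d = \inval(d')$, and $d = \inop{\hash{\iota}{\op}}(d', \kappa)$ under the assumption that the goal already holds at the children $\kappa(b)$. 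For $\rulename{Let}$ we take $G = \lift{f}$ with $f = \lambda a.\, \sem{c_2}{(\eta, a)}$ (so that $f(d') \aprx{B \E \Drt} (c_2\sigma)[e'/x]$ whenever $d' \aprx{A} e'$, by the induction hypothesis on $c_2$): the $\inval$ case closes along $\letin{x = (\val e')}{c_2\sigma} \step (c_2\sigma)[e'/x]$ via Lemma~\ref{lem:approx-step}, and the $\inop$ case uses that $\lift{f}$ pushes under $\inop{}$ exactly as $\kord{let}$ commutes with an operation call, with the children hypothesis supplying the premise of the $\inop$ clause for the result. For $\rulename{Hand}$, writing $v \defeq \handler \val x \T A \mapsto c_v \effcase \ocs$ and $h$ for the semantic handler $\sem{v}{-}$, we must show $h(\eta)(d) \aprx{B \E \Drt'} \withhandle{v\sigma}{c}$ whenever $d \aprx{A \E \Drt} c$: the $\inval$ case uses $\withhandle{v\sigma}{(\val e')} \step (c_v\sigma)[e'/x]$ with the induction hypothesis on $c_v$, while the $\inop$ case $d = \inop{\hash{\iota}{\op}}(d', \kappa)$, with $c \eval \call{\iota}{\op}{e_0}{\cont{y}{c_0}}$, calls for a further, innermost induction on the list $\ocs$ establishing that $\xsem{\ocs}_{\hash{\iota}{\op}}(\eta, d', h(\eta) \circ \kappa)$ approximates $(\ocs\sigma)_{\hash{\iota}{\op}}(e_0, \fun{y \T B^\op} \withhandle{v\sigma}{c_0})$. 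In that innermost induction the empty list and the non-matching clause are immediate or recurse, the matching clause defers to the main induction hypothesis on the handler body, and the pivotal input throughout is that $h(\eta) \circ \kappa \aprx{B^\op \to (B \E \Drt')} \fun{y \T B^\op} \withhandle{v\sigma}{c_0}$ --- which holds because for $d'' \aprx{B^\op} e'$ the children hypothesis of the middle induction gives $h(\eta)(\kappa(d'')) \aprx{B \E \Drt'} \withhandle{v\sigma}{c_0[e'/y]}$, and one step $(\fun{y \T B^\op}\withhandle{v\sigma}{c_0})\,e' \step \withhandle{v\sigma}{c_0[e'/y]}$ together with Lemma~\ref{lem:approx-step} closes the gap.

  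The main obstacle is the $\rulename{Hand}$ case, for two reasons. First, it stacks three inductions: the outer one on the typing derivation, a middle one on the formal approximation relation at the computation type --- unavoidable, since $h(\eta)$ is defined by recursion over $\compdom_\Omega(\ldots)$ and so cannot be analysed structurally on a term --- and an innermost one on the operation-case list $\ocs$ that drives the selector $\xsem{\ocs}_{\hash{\iota}{\op}}$; it also relies on the relations $\aprx{\C}$ being admissible (Lemma~\ref{lem:approx-sup}) for the middle induction principle to be applicable at all. Second, one has to verify that the recursion principle for $\compdom_\Omega(\ldots)$ used to \emph{define} $h$ is perfectly aligned with the operational rule for $\kord{handle}$: the denotational ``the handler wraps itself around the continuation'' must match the operational one, so that the children $h(\eta)\circ\kappa$ of an $\inop$-node correspond to the continuation $\fun{y \T B^\op}\withhandle{v\sigma}{c_0}$ produced by a handle step. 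This alignment is the crux from which the rest of the $\rulename{Hand}$ case follows mechanically.
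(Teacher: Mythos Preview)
Your proof is correct and follows essentially the same plan as the paper: simultaneous induction on the typing derivation, with the \rulename{Hand} case requiring a nested ``induction on $d$'' (which you cast as the minimal-invariant induction principle for $\aprx{A \E \Drt}$; the paper phrases it simply as induction on $d$, relying on the principle from Section~\ref{sub:computation-domains}) followed by an innermost induction on the list $\ocs$, and with \rulename{Let} treated as the degenerate handler with only a value case. The one place where you do more work than necessary is the subsumption rules: you propose a separate monotonicity lemma ${\aprx{A}} \subseteq {\aprx{A'}}$ for $A \le A'$, to be proved by induction on the subtyping derivation using minimal-invariant machinery; the paper instead observes that since $A \le A'$ implies $A^s = (A')^s$ and the clauses defining $\aprx{-}$ depend only on the skeletal type (in particular $\aprx{E^R}$ ignores $R$ and $\aprx{A \E \Drt}$ ignores $\Drt$), the two relations are literally equal and there is nothing to prove.
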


\begin{proof}
  We prove the statements by induction on the derivation of the typing judgments. Let
  $\eta = (d_1, \ldots, d_n)$ and $\sigma = [e_1/x_1, \ldots, e_n/x_n]$. We write
  $[\sigma, e/x]$ for the substitution $[e_1/x_1, \ldots, e_n/x_n, e/x]$. Throughout we
  assume that bound variables occurring in various terms do not appear in~$\sigma$:
  \begin{description}
    \item[Cases \rulename{Var}, \rulename{True}, \rulename{False}, \rulename{Unit}, \rulename{Zero}, \rulename{Inst}]
      these are all trivial.

    \item[Case \rulename{Succ}] $\ctx \ent (\succ e) : \natty$.
      By the induction hypothesis, we have $\sem{e}{\eta} \aprx{\natty} e \sigma$.
      Next, notice that the closed expression $e \sigma \T \natty$ must be $\kord{succ}^k \, 0$ for some $k$,
      hence $\sem{e}{\eta} = k$, and so
      \[
        \sem{\succ e}{\eta} = (k + 1) \aprx{\natty} \succ (e \sigma) = (\succ e) \sigma.
      \]

    \item[Case \rulename{Fun}] $\ctx \ent (\fun{x} c) : A \to \C$.
      If $d' \aprx{A} e'$ then by the induction hypothesis for~$c$
      \begin{equation*}
        \sem{\ctx, x : A \ent c}{(\eta, d')} \aprx{\C} c [\sigma, e'/x].
      \end{equation*}
      Because $\sem{\ctx, x : A \ent c}{(\eta, d')} = (\sem{\ctx \ent \fun{x} c}{\eta})\, d'$ and
      \begin{equation*}
        ((\fun{x} c)\sigma) \, e' = (\fun{x} c \sigma) \, e' \step c [\sigma, e'/x],
      \end{equation*}
      we may use Lemma~\ref{lem:approx-step} to get the desired conclusion
      \begin{equation*}
        (\sem{\ctx \ent \fun{x} c}{\eta})\, d' \aprx{\C} ((\fun{x} c)\sigma) \, e'.
      \end{equation*}
      Most other cases in the proof follow the same pattern, so we shall not explicitly
      mention uses of Lemma~\ref{lem:approx-step} anymore.

    \item[Case \rulename{Hand}] $\ctx \ent (\handler \val x \mapsto c_v \effcase \ocs) : A \E \Drt \hto B \E \Drt'$.
      We abbreviate the handler as $h$.
      Assuming $d \aprx{A \E \Drt} c$ we need to show that
      \begin{equation*}
        (\sem{h}{\eta})(d) \aprx{B \E \Drt'} (\withhandle{h \sigma}{c}).
      \end{equation*}
      We proceed by an induction on $d$:
      \begin{itemize}
      \item If $d = \bot$ then the conclusion follows because $\sem{h}{\eta}$ is strict.
      \item If $d = \inval(d')$ then $d \aprx{A \E \Drt} c$ implies $c \eval \val e'$ and $d' \aprx{A} e'$ for some $e'$ and $d'$.
      In this case, by induction hypothesis for~$c_v$,
        \begin{equation*}
          (\sem{h}{\eta})(d) = \sem{\ctx, x : A \ent c_v}{(\eta, d')} \aprx{b \E \Drt} c_v [\sigma, e'/x],
        \end{equation*}
        and
        \begin{equation*}
          (\withhandle{h \sigma}{c}) \step \cdots \step
          (c_v \sigma) [e'/x] = c_v [\sigma, e'/x].
        \end{equation*}
      \item If $d = \inop{\hash{\iota}{\op}}(d', \kappa)$ then $d \aprx{A \E \Drt} c$ implies $c \eval \call{\iota}{\op}{e'}{\cont{y}{c'}}$, $d' \aprx{A^\op} e'$
      and $\kappa \aprx{B^\op \to A \E \Drt} (\fun{y} c')$.
      Now
        \begin{equation*}
          (\sem{h}{\eta})(\inop{\hash{\iota}{\op}}(d', \kappa)) = \xsem{\ocs}_{\hash{\iota}{\op}} (\eta, d', \sem{h}{\eta} \circ \kappa)
        \end{equation*}
        and
        \begin{equation*}
          (\withhandle{h \sigma}{c}) \step \cdots \step (\ocs\, \sigma)_{\hash{\iota}{\op}} (e', (\fun{y} \withhandle{h \sigma}{c'})),
        \end{equation*}
        therefore it suffices to prove
        \begin{equation*}
          \xsem{\ocs}_{\hash{\iota}{\op}} (\eta, d', \sem{h}{\eta} \circ \kappa) \aprx{B \E \Drt'}
          (\ocs\, \sigma)_{\hash{\iota}{\op}} (e', (\fun{y} \withhandle{h \sigma}{c'})),
        \end{equation*}
        which we do by induction on the length of~$\ocs$. When $\ocs$ is $\ocsnil$ the statement becomes
        \begin{equation*}
          \inop{\hash{\iota}{\op}}(d', \sem{h}{\eta} \circ \kappa) \aprx{B \E \Drt'}
          \call{\iota}{\op}{e'}{\cont{y}{\withhandle{h \sigma}{c'}}}.
        \end{equation*}
        We already know $d' \aprx{A^\op} e'$,
        and still need
        \[
          \sem{h}{\eta}(\kappa(d'')) \aprx{B \E \Drt'} (\withhandle{h \sigma}{c'[e''/ /y]})
        \]
        assuming $d'' \aprx{B^\op} e''$. This follows from the available induction hypotheses.

        When $\ocs$ is $(\call{\iota'}{\op'}{x}{k} \mapsto c'' \effcase \ocs')$ there are two further subcases:
        \begin{itemize}
        \item if $\hash{\iota}{\op} = \hash{\iota'}{\op'}$ then we get
          \begin{align*}
            &\sem{\ctx, x, k \ent c''}{(\eta, d, \sem{h}{\eta} \circ \kappa)} \aprx{B \E \Drt'} {} \\
            &\quad c'' [\sigma, e'/x, (\fun{y}{\withhandle{h \sigma}{c'}})/k]
          \end{align*}
          which holds by induction on $c''$,
        \item if $\hash{\iota}{\op} \neq \hash{\iota'}{\op'}$ then we are left with
          \begin{align*}
            &\xsem{\ocs'}_{\hash{\iota}{\op}} (\eta, d', \sem{h}{\eta} \circ \kappa) \aprx{B \E \Drt'} \\
            &\quad (\ocs'\, \sigma)_{\hash{\iota}{\op}} (e, (\fun{y}{\withhandle{h \sigma}{c'}})),
          \end{align*}
          which is just the induction hypothesis for $\ocs'$.
        \end{itemize}
      \end{itemize}

    \item[Cases \rulename{IfThenElse}, \rulename{Match}, \rulename{Absurd}] 
      We consider only
      \[
        \ctx \ent (\conditional{e}{c_1}{c_2}) : \C,
      \]
      as \rulename{Match} is similar and \rulename{Absurd} is vacuous.
      Because $e \sigma$ is a closed expression of type $\boolty$ it is either $\tru$ or
      $\fls$. Let us take a look at the first possibility. By induction hypothesis for
      $c_1$ we have $\sem{c_1}{\eta} \aprx{\C} c_1 \sigma$. Again, because
      $\sem{\conditional{\tru}{c_1}{c_2}}{\eta} = \sem{c_1}{\eta}$ and
      \begin{equation*}
        (\conditional{\tru}{c_1}{c_2})\sigma \step c_1\sigma
      \end{equation*}
      it follows that
      \begin{equation*}
        \sem{\conditional{\tru}{c_1}{c_2}}{\eta} \aprx{\C} (\conditional{\tru}{c_1}{c_2})\sigma,
      \end{equation*}
      as required.

    \item[Case \rulename{App}] $\ctx \ent e_1' \, e_2' : \C$ where $\ctx \ent e_1' : A \to \C$
      and $\ctx \ent e_2' : A$.
      By induction hypotheses for $e_1'$ and $e_2'$ we have $\sem{e_1'}{\eta} \aprx{A \to
        \C} e_1' \sigma$ and $\sem{e_2'}{\eta} \aprx{A} e_2 \sigma$, therefore by the
      definition of $\aprx{A \to \C}$,
      \begin{equation*}
        \sem{e_1' \, e_2'}{\eta} =
        (\sem{e_1'}{\eta}) (\sem{e_2'}{\eta}) \aprx{\C} (e_1' \sigma) (e_2' \sigma) = (e_1' \, e_2') \sigma.
      \end{equation*}

    \item[Case \rulename{Val}] $\ctx \ent \val e : A \E \Drt$. 
      By induction on $e$ we have $\sem{e}{\eta} \aprx{A} e \sigma$, therefore by the
      second clause in the definition of $\aprx{A \E \Drt}$
      \begin{equation*}
        \sem{\val e}{\eta} = \inval(\sem{e}{\eta}) \aprx{A \E \Drt} \val (e \sigma) = (\val e) \sigma.
      \end{equation*}

    \item[Case \rulename{Op}] $\ctx \ent \call{\iota}{\op}{e}{\cont{y}{c}} : A \E \Drt$.
      This case works much like a combination of \rulename{Val} and \rulename{App}, so we omit the details.

    \item[Case \rulename{Let}] $\ctx \ent (\letin{x = c_1}{c_2}) : B \E \Drt$.
      This case is treated like a handler which only has a $\kord{val}$ case.

    \item[Case \rulename{With}] $\ctx \ent \withhandle{e}{c} : \D$ where $\ctx \ent e : \C \hto \D$
      and $\ctx \ent c : \C$. 
      By induction hypotheses we have $\sem{e}{\eta} \aprx{\C \hto \D} e\sigma$ and
      $\sem{c}{\eta} \aprx{\C} c\sigma$, therefore by the definition of~$\aprx{\C \hto
        \D}$
      \begin{equation*}
        (\sem{e}{\eta}) (\sem{c}{\eta})  \aprx{\D} (\withhandle{e \sigma}{c \sigma}).
      \end{equation*}
      The left-hand side equals $\sem{\withhandle{e}{c} : \D}{\eta}$ and the right
      $(\withhandle{e}{c}) \sigma$, so we are done.

    \item[Case \rulename{LetRec}] $\ctx \ent \letrecin{f x = c_1}{c_2} : \D$.
      After a short calculation the problem reduces to showing that
      \begin{equation*}
        g \aprx{A \to \C} (\fun{x} \letrecin{f x = c_1 \sigma} c_1 \sigma)
      \end{equation*}
      where $g$ is the least fixed-point of the operator $\Phi : \skel{A \to \C} \to \skel{A \to \C}$, defined by
      \begin{equation*}
        \Phi (h) = \lambda a \in \skel{A} \,.\, \sem{\ctx, f, x \ent c_1}{(\eta, h, a)}.
      \end{equation*}
      By Lemma~\ref{lem:approx-sup} it suffices to show that
      \begin{equation*}
        h \aprx{A \to \C} (\fun{x} \letrecin{f x = c_1 \sigma} c_1 \sigma)
      \end{equation*}
      implies
      \begin{equation*}
        \Phi(h) \aprx{A \to \C} (\fun{x} \letrecin{f x = c_1 \sigma} c_1 \sigma).
      \end{equation*}
      This amounts to proving that if $d \aprx{A} e$ then
      \begin{align*}
        \Phi(h)(d)
        &= \sem{\ctx, f, x \ent c_1}{(\eta, h, d)} \\
        &\aprx{\C} c_1 [\sigma, (\fun{x} \letrecin{f x = c_1 \sigma} c_1 \sigma)/f, e/x],
      \end{align*}
      which holds by the induction hypothesis for~$c_1$.

    \item[Case \rulename{SubExpr}, \rulename{SubComp}]
      For \rulename{SubExpr}, take $\ctx \ent e \T A'$ where $\ctx \ent e \T A$ and $A \le A'$.
      From induction hypothesis, we get $\sem{\ctx \ent e \T A}{\eta} \aprx{A} e \sigma$.
      Since $\skel{A} = \skel{A'}$, we have $\aprx{A} = \aprx{A'}$.
      Additionally, $\sem{\ctx \ent e \T A'} = \sem{\ctx \ent e \T A}$,
      hence $\sem{e}{\eta} \aprx{A'} e \sigma$.
      For \rulename{SubComp}, the proof is similar. \qedhere
    \end{description}
\end{proof}

\begin{cor}[Adequacy]
  \label{cor:adequacy}%
  If $\ent c : \unitty \E \Drt$ and $\xsem{c} = \inval(\star)$ then $c \eval \val \unt$.
\end{cor}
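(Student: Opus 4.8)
The plan is to derive adequacy directly from the fundamental lemma of formal approximation, Lemma~\ref{lem:adequacy-general}, specialized to the empty context. Since $c$ is closed and $\ent c : \unitty \E \Drt$, part~(2) of that lemma with $n = 0$ (so that the substitution $\sigma$ is empty and $c\sigma = c$) immediately yields $\xsem{c} \aprx{\unitty \E \Drt} c$. This is the only nontrivial input, and its proof — the long induction over typing derivations with the handler case as its centerpiece — has already been carried out.

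Next I would unfold the definition of $\aprx{\unitty \E \Drt}$ against the hypothesis $\xsem{c} = \inval(\star)$. Recall that $d \aprx{\unitty \E \Drt} c$ holds in exactly one of three situations: $d = \bot$; or $d = \inval(d')$ with $c \eval \val e$ and $d' \aprx{\unitty} e$; or $d = \inop{\hash{\iota}{\op}}(d', \kappa)$ together with the corresponding operational condition. The element $\inval(\star)$ is neither $\bot$ nor of the form $\inop{\hash{\iota}{\op}}(-,-)$, since in $F(\compdom)$ the constructor $\inval$ lands in a summand disjoint from every $\inop{\hash{\iota}{\op}}$ and from $\bot$. Hence only the middle clause can hold, so there is an expression $e$ with $c \eval \val e$, and by injectivity of $\inval$ we get $\star \aprx{\unitty} e$.

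Finally, the definition $\star \aprx{\unitty} e \iff (\star = \star \land e = \unt)$ forces $e = \unt$, so $c \eval \val \unt$, as required.

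I do not anticipate any genuine obstacle here: all the work lies in Lemma~\ref{lem:adequacy-general}, and what remains is a matter of reading off definitions. The only point that deserves a moment's care is the observation that the three alternatives in the definition of $\aprx{\unitty \E \Drt}$ are mutually exclusive, which rests on the disjointness and injectivity of the domain constructors $\bot$, $\inval$, and $\inop{-}$ arising from the coproduct-plus-lift structure of $F$.
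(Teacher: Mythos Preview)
Your argument is correct and matches the paper's own proof exactly: apply Lemma~\ref{lem:adequacy-general} in the empty context to obtain $\xsem{c} \aprx{\unitty \E \Drt} c$, then read off $c \eval \val \unt$ from the definition of $\aprx{\unitty \E \Drt}$ given that $\xsem{c} = \inval(\star)$. You have simply spelled out the case analysis that the paper leaves implicit.
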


\begin{proof}
  By the previous lemma $\xsem{c} \aprx{\unitty \E \Drt} c$. Therefore, if $\xsem{c} =
  \inval(\star)$ then $c \eval \val \unt$ by the definition of $\aprx{\unitty \E \Drt}$.
\end{proof}

The stated adequacy suffices for our purposes, but of course similar statements holds for
other ground types. Regarding operations, if $\xsem{c} = \inop{\hash{\iota}{\op}}(d,
\kappa)$, then $\hash{\iota}{\op} \in \Drt$ and so $c \eval \call{\iota}{\op}{e}{k}$ for
some $e$ and $k$ such that $\xsem{e} = d$ and $\xsem{k} = \kappa$. Therefore, if $\xsem{c}
\neq \bot$ then $c \eval r$ for some result~$r$.


\section{Equational reasoning}
\label{sec:reasoning}

\subsection{Contextual and denotational equivalence}
\label{sub:cont-denot-equiv}

In this section we provide principles that allow us to reason about programs. Let us first
recall how \emph{contextual equivalence} is defined. An \emph{expression context}~$\evctx$
is a computation with several occurrences of a \emph{hole} $[\,]$ in positions where an
expression is expected. When the hole is plugged with an expression $e$ we get a computation
$\evctx[e]$. A \emph{computation context}~$\evctx$ is defined analogously, except that the
holes appear where computations are expected.

We say that expressions $e$ and $e'$ are \emph{contextually equivalent}, written $e
\approx e'$, when for all expression contexts $\evctx$ such that $\evctx[e]$ and
$\evctx[e']$ are both of type $\unitty \E \Drt$, we have $\evctx[e] \eval \val \unt$ if and only
if $\evctx[e'] \eval \val \unt$. Contextual equivalence of computations is defined
analogously.

Contextually equivalent expressions or computations may be interchanged anywhere in the
code. Therefore, it is quite useful to know that a certain contextual equivalence holds,
but unfortunately it is difficult to work directly with contextual equivalence. Luckily,
denotational equivalence is more easily handled and is related to contextual equivalence
by the adequacy theorem.

We write $e \equiv e'$ and $c \equiv c'$ when the denotations of two expressions or
computations are the same. More precisely, if $\ctx \ent e : A$ and $\ctx \ent e' : A$
then $\ctx \ent e \equiv e' : A$, or just $e \equiv e'$, means $\xsem{\ctx \ent e \T A} =
\xsem{\ctx \ent e' \T A}$, and similarly for computations.

\begin{prop}
  Denotationally equal expressions are contextually equivalent, and likewise for computations.
\end{prop}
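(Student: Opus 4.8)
The plan is to reduce the statement to three facts that are already available: compositionality of the denotational semantics, the Soundness theorem, and the Adequacy corollary (Corollary~\ref{cor:adequacy}). Fix expressions with $\ctx \ent e \equiv e' : A$ and let $\evctx$ be any expression context for which $\evctx[e]$ and $\evctx[e']$ are both closed computations of type $\unitty \E \Drt$. The crux is a \emph{compositionality lemma}: for every such context there is a continuous map $C_\evctx$ with $\xsem{\evctx[d]} = C_\evctx\bigl(\xsem{d}\bigr)$ for every expression $d$ that may be plugged into $\evctx$ (and symmetrically for computation contexts). Granting this, $\xsem{e} = \xsem{e'}$ immediately gives $\xsem{\evctx[e]} = \xsem{\evctx[e']}$.

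The remaining argument is then short. Suppose $\evctx[e] \eval \val\unt$. By the proposition relating the big-step and small-step semantics, $\evctx[e] \step \cdots \step \val\unt$, so applying the Soundness theorem along this finite reduction sequence (using Preservation to keep the intermediate computations well-typed) yields $\xsem{\evctx[e]} = \xsem{\val\unt} = \inval(\star)$. Compositionality gives $\xsem{\evctx[e']} = \xsem{\evctx[e]} = \inval(\star)$, and Corollary~\ref{cor:adequacy} then produces $\evctx[e'] \eval \val\unt$. The converse is symmetric, so $e \approx e'$. The statement for computations is proved verbatim, using computation contexts instead of expression contexts.

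It remains to prove the compositionality lemma, which I would do by induction on the structure of $\evctx$. In the base case $\evctx = [\,]$ the hole may sit under a number of binders, so the plugged term is actually typed in a context $\ctx_0 \supseteq \ctx$; here I would appeal to a routine weakening lemma, $\sem{\ctx_0 \ent d : A}{(\eta,\xi)} = \sem{\ctx \ent d : A}{\eta}$, to lift $\xsem{\ctx \ent e : A} = \xsem{\ctx \ent e' : A}$ to the enlarged context. For the inductive steps, each clause in the definition of $\xsem{-}$ builds the denotation of a term from the denotations of its immediate subterms --- one of which carries the hole, in a possibly larger typing context --- so the induction hypothesis applies directly and $C_\evctx$ is assembled from the corresponding semantic operation. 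The subsumption rules cause no trouble: they leave the denotation unchanged, and by the Coherence theorem the choice of derivation is irrelevant; in particular, if $\evctx$ happens to use the plugged term at a supertype $A' \ge A$, then $\skel{A'} = \skel{A}$ and $\xsem{e : A'} = \xsem{e : A} = \xsem{e' : A} = \xsem{e' : A'}$.

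I do not expect a deep obstacle: all the mathematical content sits in the already-proved Soundness and Adequacy results. The one lemma that genuinely needs to be stated and checked is compositionality, and within it the only mild subtleties are the interaction with variable binders (dispatched by weakening) and with the subsumption rules (dispatched by coherence); everything else is bookkeeping over the clauses defining $\xsem{-}$.
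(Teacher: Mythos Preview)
Your proposal is correct and follows the same route as the paper's proof: compositionality gives $\xsem{\evctx[e]} = \xsem{\evctx[e']}$, soundness turns $\evctx[e] \eval \val\unt$ into $\xsem{\evctx[e]} = \inval(\star)$, and adequacy closes the loop. You are in fact more careful than the paper, which simply asserts compositionality and invokes ``soundness'' directly on the big-step evaluation; your explicit detour through the small-step semantics and your sketch of the compositionality lemma (with the weakening and coherence remarks) fill in details the paper leaves implicit.
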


\begin{proof}
  Suppose $e \equiv e'$ and consider an expression context $\evctx$ such that $\evctx[e]$
  and $\evctx[e']$ both have type $\unitty \E \Drt$. Assume $\evctx[e] \eval \val \unt$.
  By soundness of denotational semantics $\xsem{\evctx[e]} = \inval (\star)$. By assumption
  $\xsem{e} = \xsem{e'}$ and because denotational semantics is compositional it follows that
  $\xsem{\evctx[e]} = \xsem{\evctx[e']}$. Now by adequacy $\evctx[e'] \eval \val \unt$.
  The proof for computations is the same.
\end{proof}

\indent Denotational equivalence is a congruence and is preserved by well-typed
substitutions. It validates the following $\beta$-rules, where the various expressions and
computations have suitable types, and $h$ stands for $\handler \val x \mapsto c_v \effcase
\ocs$:
\begin{align*}
  \conditional{\tru}{c_1}{c_2} &\Equiv c_1 \\
  \conditional{\fls}{c_1}{c_2} &\Equiv c_2 \\
  \matchnat{0}{c_1}{x}{c_2} &\Equiv c_1 \\
  \matchnat{(\succ e)}{c_1}{x}{c_2} &\Equiv c_2[e / x] \\
  (\fun{x} c) \, e &\Equiv c[e / x] \\
  \letin{x = \val e}{c} &\Equiv c[e / x] \\
  \letin{x = \call{e_1}{\op}{e_2}{\cont{y}{c_1}}} c_2
    &\Equiv \call{e_1}{\op}{e_2}{\cont{y}{\letin{x = c_1} c_2}} \\
  \letrecin{f \, x = c_1} c_2
    &\Equiv c_2[(\fun{x} \letrecin{f \, x = c_1} c_1) / f] \\
  \withhandle{h}{(\val e)} &\Equiv c_v[e / x] \\
  \withhandle{h}{(\call{\iota}{\op}{e}{\cont{y}{c}})}
    &\Equiv \ocs_{\hash{\iota}{\op}}(e, (\fun{y} \withhandle{h}{c}))
\end{align*}
We also have the following $\eta$-rules, provided the expressions and computations
have easily guessed types:
\begin{align*}
  e &\Equiv \unt \\
  \fun{x} e \, x &\Equiv  e \\
  \letin{x = c}{\val x} &\Equiv c \\
  \conditional{e}{c[\tru / x]}{c[\fls / x]} &\Equiv c[e / x] \\
  \matchnat{e}{c[0 / x]}{y}{c[\succ y / x]} &\Equiv c[e / x] \\
  \absurd{e} &\Equiv c[e / x]
\end{align*}
We omit the proofs because they just involve unfolding of semantic definitions. An
exception is the $\eta$-rule for $\kord{let}$ binding, which is proved by induction in the
next section.

A variety of other equivalences is readily validated, for example
\[
  \letin{x = c_1} c_2 \Equiv \withhandle{(\handler \val x \mapsto c_2 \effcase \ocsnil)}{c_1}
\]
and the ``associativity'' of $\kord{let}$ binding~\cite{DBLP:journals/iandc/Moggi91}
\[
  \letin{x = (\letin{y = c_1} c_2)} c_3
  \Equiv
  \letin{y = c_1}{(\letin{x = c_2} c_3)},
\]
where $y$ must not occur freely in $c_3$, see~\cite{plotkin08a-logic, pretnar10the-logic}
for other examples.

\subsection{An induction principle for effects}

The induction principle for the computation domains from~Section~\ref{sub:computation-domains} is
useful for deriving general laws that do not depend on a particular choice of effects. It
is less useful for specific examples which involve a carefully chosen set of effects and
handlers, because it forces us to consider operations that have nothing to do with the
situation at hand. The induction principle Theorem~\ref{thm:tsem-induction} remedies the drawback.

A typical application arises when we prove an equivalence of the form
$\evctx[c] \equiv \evctx'[c] : \C$ for all computations $c$ of a suitable type $A \E
\Drt$. The computation contexts $\evctx$ and $\evctx'$ are interpreted as continuous maps
$\xsem{\evctx}, \xsem{\evctx'} : \skel{A \E \Drt} \to \skel{\C}$, and the equivalence may
be phrased as
\begin{equation*}
  \fra{t \in \tsem{A \E \Drt}} \sem{\evctx}(t) = \sem{\evctx'}(t).
\end{equation*}
The equality inside the quantifier is an admissible predicate on $\tsem{A \E \Drt}$, so
the induction principle applies. It is a bit cumbersome to perform the proof using the
semantic brackets $\xsem{{-}}$ all over the place. Instead, with a bit of flexibility in
notation, we can write the proof in a syntactic manner as follows:
\begin{enumerate}
\item We write $\bot$ for a non-terminating computation, e.g., $\letrecin{f x = f x}{f
    \unt}$, and check that $\evctx[\bot] \equiv \evctx'[\bot]$.
\item We verify that $\evctx[\val e] \equiv \evctx'[\val e]$ where $e$ is a meta-variable of type $A$.
\item We verify, for each $\hash{\iota}{\op} \in \Drt$,
  \begin{equation*}
    \evctx[\call{\iota}{\op}{e}{\cont{y}{\kappa\, y}}] \equiv \evctx'[\call{\iota}{\op}{e}{\cont{y}{\kappa\, y}}],
  \end{equation*}
  where $e$ is a meta-variable of type $A^\op$ and $\kappa$ is a meta-variable of type
  $B^\op \to A \E \Drt$. The induction hypothesis is that $\evctx[\kappa \, e'] \equiv
  \evctx'[\kappa \, e']$ for all expressions $e'$ of type $B^\op$.
\end{enumerate}
We apply the method to prove the $\eta$-rule
\begin{equation*}
  \letin{x = c} \val x \Equiv c
\end{equation*}
by induction:
\begin{enumerate}
\item $(\letin{x = \bot} \val x) \equiv \bot$ because $\kord{let}$ is strict in both arguments,
\item $(\letin{x = (\val a)} \val x) \equiv (\val x)[a/x] \equiv \val a$ by a $\beta$-rule for $\kord{let}$,
\item The induction step is proved by
  \begin{align*}
    & \letin{x = (\call{\iota}{\op}{e}{\cont{y}{\kappa \, y}})} \val x \\
    &\quad
    \begin{aligned}
      &\Equiv
      \call{\iota}{\op}{e}{\cont{y}{
        \letin{x = \kappa \, y} \val x
      }} \\
      &\Equiv
      \call{\iota}{\op}{e}{\cont{y}{
        \kappa \, y
      }}
    \end{aligned}
  \end{align*}
  where we used a $\beta$-rule in the first step and the induction hypothesis for $\kappa \, y$ in the second.
\end{enumerate}
A non-trivial application of the induction principle is presented in Section~\ref{sub:comm-non-interf}.


\section{Example: mutable references}
\label{sec:examples}

As a more elaborate example we consider mutable references.
In core \Eff they are implemented by the effect $\kord{ref}$ with operations
\begin{equation*}
  \kord{lookup} : \unitty \to A
  \qquad\text{and}\qquad
  \kord{update} : A \to \unitty,
\end{equation*}
where $A$ is a fixed pure type (in full \Eff we could use a parameter).
The handler which handles the reference given by an expression $r \T \kord{ref}^\Rgn$ is defined by
(the underscore $\anon$ indicates an ignored parameter):
\begin{align*}
  \kord{state}_r \defeq
  \handler
  &\val x \mapsto \val (\fun{s} \val x) \\
  \effcase {}& \call{r}{\kord{lookup}}{\anon}{k} \mapsto \val (\fun{s} \letin{f = k \, s} f \, s) \\
  \effcase {}& \call{r}{\kord{update}}{s'}{k} \mapsto \val (\fun{s} \letin{f = k \, \unt} f \, s')
\end{align*}
This is just the usual monadic-style treatment of state which wraps a computation into
a state-carrying function. For any instance $\iota \in \iotas_{\kord{ref}}$, pure type $B$,
and dirt $\Drt$, the computation $\kord{state}_\iota$ has the type
\begin{equation*}
  B \E (\set{\hash{\iota}{\kord{lookup}},\hash{\iota}{\kord{update}}} \cup \Drt)
  \hto
  (A \to B \E \Drt) \E \Drt.
\end{equation*}
The type says that the handler erases lookups and updates of $\iota$ from
a computation. The return type $(A \to B \E \Drt) \E \Drt$ has an outer dirt $\Drt$
because effects could happen before the first lookup or update.
Note that the double dirt would not arise if we took
the call-by-push-value approach to handlers~\cite{plotkin13handling}.

The handler wraps a
handled computation $c$ of type $B \E \Drt$ into a function expecting the current state,
which explains the type $A \to B \E \Drt$. In practice such a function is immediately
applied to an initial state~$e$ of type~$A$ (such a final transformation of handled
computations is so common that in full \Eff handlers have a special $\kord{finally}$ case
just for this purpose):
\begin{equation*}
  \letin{f = (\withhandle{h}{c})} f \, e
\end{equation*}
The type of this computation is simply $B \E \Drt$. In particular, if the only effects
in $c$ are lookups and updates of $\iota$, we get a pure computation.

If the type of $r$ is weaker, for example $\kord{ref}^{\set{\iota_1, \iota_2}}$,
we are not able to deduce anything useful. The best we can do is
\begin{equation*}
  \kord{state}_r \T B \E \Drt \hto (A \to B \E \Drt) \E \Drt
\end{equation*}
for any dirt $\Drt$.

We may handle several references at once by wrapping a computation into several handlers.
For example, let $c$ be the computation which swaps the contents of two references:
\begin{align*}
  &\letin{y_1 = \hash{\iota_1}{\kord{lookup}} \, \unt} \\
  &\letin{y_2 = \hash{\iota_2}{\kord{lookup}} \, \unt} \\
  &\letin{\anon = \hash{\iota_1}{\kord{update}} \, y_2} \\
  &\letin{\anon = \hash{\iota_2}{\kord{update}} \, y_1} (\val \unt)
\end{align*}
By itself, $c$ has the type
\begin{equation*}
  \unitty \E \set{
    \hash{\iota_1}{\kord{lookup}}, \hash{\iota_1}{\kord{update}},
    \hash{\iota_2}{\kord{lookup}}, \hash{\iota_2}{\kord{update}}
  },
\end{equation*}
When $c$ is wrapped by the handler $h_2 = \kord{state}_{\iota_2}$,
\[
  \letin{f_2 = (\withhandle{h_2}{c})} f_2 \, e_2,
\]
the type becomes
$\unitty \E \set{\hash{\iota_1}{\kord{lookup}}, \hash{\iota_1}{\kord{update}}}$.
When the handler $h_1 = \kord{state}_{\iota_1}$ is used on top of that,
\[
  \letin{f_1 = \big(\withhandle{h_1}{\big(\letin{f_2 = (\withhandle{h_2}{c})} f_2 \, e_2\big)}\big)} f_1 \, e_1
\]
we get the pure type $\unitty \E \purely$. Beware, it is important that the state is initialized at the correct point in the computation. For instance,
\[
  \letin{f_1 = \big(\withhandle{h_1}{(\withhandle{h_2}{c})}\big)} 
  (\letin{f_2 = f_1 \, e_1} f_2 \, e_2)
\]
does not do the right thing. We are warned about possible trouble by the effect system
which gives the computation the type $\unitty \E \set{\hash{\iota_1}{\kord{lookup}},
  \hash{\iota_1}{\kord{update}}}$ --- the operations for $\iota_1$ are escaping the handlers!
A less modular way of handling two instances is to create a new handler with four operation cases, two for each of the instances.

\subsection{Reasoning about references}
\label{sub:reasoning-references}

With handlers the workings of a computation may be inspected in a highly intensional way.
Consequently, there are few generally valid observational equivalences. However, when known
handlers are used to handle operations, we may derive equivalences that describe the
behavior of operations. The situation is opposite to that of~\cite{plotkin13handling},
where we start with an equational theory for operations and require that the handlers
respect it.

We demonstrate the technique for mutable state. Let $h = \kord{state}_\iota$ and abbreviate 
\begin{equation*}
  \letin{f = (\withhandle{h}{c})} f \, e
\end{equation*}
as $\hndl[c, e]$. Straightforward calculations give us the equivalences
\begin{align*}
  \hndl[(\call{\iota}{\kord{lookup}}{\unt}{\cont{y}{c}}), e]
  &\Equiv
  \hndl[c[e / y], e] \\
  \hndl[(\call{\iota}{\kord{update}}{e'}{\cont{\anon}{c}}), e]
  &\Equiv
  \hndl[c, e'] \\
  \hndl[\val e', e]
  &\Equiv
  \val e',
\end{align*}
for instance,
\begin{align*}
  &\hndl[(\call{\iota}{\kord{update}}{e'}{\cont{\anon}{c}}), e] \\
  &\qquad
  \begin{aligned}[t]
    &\Equiv
    \letin{f = \val (\fun{s}
      \letin{f' = (\fun{\anon} \withhandle{h}{c}) \, \unt} f' \, e'
      )} f \, e
    \\ 
    &\Equiv
    \letin{f = \val (\fun{s} \hndl[c, e'])} f \, e
    \\
    &\Equiv
    (\fun{s} \hndl[c, e']) \, e
    \\
    &\Equiv
    \hndl[c, e'].
  \end{aligned}
\end{align*}
These suffice for simple equational reasoning about state. If we read them as rewrite
rules they allow us to progressively transform a computation to a simpler form. In fact,
the transformations mimic the usual coalgebraic operational semantics for state~\cite{plotkin08:_tensor_comod_model_operat_seman}.

Of course, a realistic computation will contain several handlers. As long as they do not
interfere with each other, we can still use equivalences to usefully manipulate them. For
example, if $h'$ is a handler with no operation case for $\hash{\iota}{\kord{lookup}}$ then
\begin{align*}
  & \hndl [
    (\withhandle{h'}{\call{\iota}{\kord{lookup}}{\unt}{\cont{y}{c}}}),
    e] \\
  &\qquad
  \begin{aligned}[t]
    &\Equiv
    \hndl[
      (\call{\iota}{\kord{lookup}}{\unt}{\cont{y}{\withhandle{h'}{c}}}),
      e]
    \\
  &\Equiv
  \hndl [(\withhandle{h'}{c[e / y]}), e].
\end{aligned}
\end{align*}
It may happen that a lookup or an update is nested deeply inside several handlers. The
above transformation allows us to hoist the operation out of the inner handlers so that it
is handled by the outer handler, as long as the inner handlers do not attempt to
handle~$\iota$. The transformation applies to $\kord{let}$ bindings too, as they are like
handlers without operation cases.

We may validate the seven standard equations governing state~\cite{plotkin02notions}.
There are four combinations of lookup and update (in the first equation $y$ does not occur
free in $c$):
\begin{align*}
  \hndl [\call{\iota}{\kord{lookup}}{\unt}{\cont{y}{\call{\iota}{\kord{update}}{y}{\cont{\anon}{c}}}}, e]
    &\Equiv \hndl [c, e] \\
  \hndl [\call{\iota}{\kord{lookup}}{\unt}{\cont{y}{\call{\iota}{\kord{lookup}}{\unt}{\cont{z}{c}}}}, e]
    &\Equiv \hndl [\call{\iota}{\kord{lookup}}{\unt}{\cont{y}{c[y / z]}}, e] \\
  \hndl [\call{\iota}{\kord{update}}{e}{\cont{\anon}{\call{\iota}{\kord{update}}{e'}{\cont{\anon}{c}}}}, e]
    &\Equiv \hndl[ \call{\iota}{\kord{update}}{e'}{\cont{\anon}{c}}, e] \\
  \hndl [\call{\iota}{\kord{update}}{e}{\cont{\anon}{\call{\iota}{\kord{lookup}}{\unt}{\cont{y}{c}}}}, e]
    &\Equiv \hndl [\call{\iota}{\kord{update}}{e}{\cont{\anon}{c[e / y]}}, e]
\end{align*}
For instance, the first equation is validated as follows:
\begin{align*}
  &\hndl [
    \call{\iota}{\kord{lookup}}{\unt}{\cont{y}{\call{\iota}{\kord{update}}{y}{\cont{\anon}{c}}}},
    e]
  \\
  &\qquad
  \begin{aligned}[t]
  &\Equiv \hndl [\call{\iota}{\kord{update}}{e}{\cont{\anon}{c}}, e]
  \\
  &\Equiv \hndl[c, e]
  \end{aligned}
\end{align*}
Three more equations describe commutativity of lookups and updates at different distances. Let $\iota_1 \neq \iota_2$, and write $\hndl_1$ and $\hndl_2$ for the the abbreviation $\hndl$ with respect to $\iota_1$ and $\iota_2$, respectively:
\begin{align*}
  &\hndl_1[\hndl_2[
  \call{\iota_1}{\kord{lookup}}{\unt}{\cont{y_1}{\call{\iota_2}{\kord{lookup}}{\unt}{\cont{y_2}{c}}}}
   ,e_2], e_1] \\
   &\qquad
   \Equiv
   \hndl_1[\hndl_2[
    \call{\iota_2}{\kord{lookup}}{\unt}{\cont{y_2}{\call{\iota_1}{\kord{lookup}}{\unt}{\cont{y_1}{c}}}}
   ,e_2], e_1]
   \\
  &\hndl_1[\hndl_2[
  \call{\iota_1}{\kord{update}}{e_1}{\cont{\anon}{\call{\iota_2}{\kord{update}}{e_2}{\cont{\anon}{c}}}}
  ,e_2] , e_1] \\
    &\qquad
    \Equiv
   \hndl_1[\hndl_2[
   \call{\iota_2}{\kord{update}}{e_2}{\cont{\anon}{\call{\iota_1}{\kord{update}}{e_1}{\cont{\anon}{c}}}}
   , e_2], e_1] \\
  &\hndl_1[\hndl_2[
    \call{\iota_1}{\kord{update}}{e}{\cont{\anon}{\call{\iota_2}{\kord{lookup}}{\unt}{\cont{y_2}{c}}}}
    , e_2], e_1] \\
  &\qquad
  \Equiv
  \hndl_1[\hndl_2[
  \call{\iota_2}{\kord{lookup}}{\unt}{\cont{y_2}{\call{\iota_1}{\kord{update}}{e}{\cont{\anon}{c}}}}
  , e_2], e_1]
\end{align*}
Let us check the last equation. The left-hand side transforms as
\begin{align*}
  &\hndl_1 [
    \hndl_2 [
      \call{\iota_1}{\kord{update}}{e}{\cont{\anon}{\call{\iota_2}{\kord{lookup}}{\unt}{\cont{y_2}{c}}}}
    , e_2]
  , e_1]
  \\
  &\qquad
  \begin{aligned}[t]
    &\Equiv
    \hndl_1 [
    \call{\iota_1}{\kord{update}}{e}{\cont[]{\anon}{
      \hndl_2 [
        \call{\iota_2}{\kord{lookup}}{\unt}{\cont{y_2}{c}}
      , e_2]
    }}
  , e_1]
  \\
  &\Equiv
  \hndl_1 [
    \hndl_2 [
      \call{\iota_2}{\kord{lookup}}{\unt}{\cont{y_2}{c}}
    , e_2]
  , e]
  \\
  &\Equiv
  \hndl_1 [
    \hndl_2[c[e_2 / y_2], e_2]
  , e]
  \end{aligned}
\intertext{and the right-hand side as}
  &\hndl_1 [
    \hndl_2 [
      \call{\iota_2}{\kord{lookup}}{\unt}{\cont{y_2}{\call{\iota_1}{\kord{update}}{e}{\cont{\anon}{c}}}}
    ,e_2 ]
  , e_1]
  \\
  &\qquad
  \begin{aligned}[t]
  & \Equiv {}
  \hndl_1 [
    \hndl_2 [
      \call{\iota_1}{\kord{update}}{e}{\cont{\anon}{c[e_2 / y_2]}}
    , e_2]
  , e_1]
  \\
  &\Equiv
  \hndl_1 [
    \call{\iota_1}{\kord{update}}{e}{\cont[]{\anon}{
      \hndl_2[c[e_2 / y_2], e_2]
    }}
  , e_1]
  \\
  &\Equiv
  \hndl_1 [
      \hndl_2[c[e_2 / y_2], e_2]
  , e].
  \end{aligned}
\end{align*}
The remaining two equations are proved much the same way. The symmetry in the equations
also shows that it does not matter in which order we nest the handlers for $\iota_1$ and $\iota_2$.

\subsection{Commutativity of non-interfering computations}
\label{sub:comm-non-interf}

Swapping lookups and updates that act on different instances is only a basic reasoning step. In
practice we want to swap whole computations, as long as they do not interfere. To make the idea precise,
let $\Drt_1 = \set{\hash{\iota_1}{\kord{lookup}}, \hash{\iota_1}{\kord{update}}}$, 
$\Drt_2 = \set{\hash{\iota_2}{\kord{lookup}}, \hash{\iota_2}{\kord{update}}}$,
let $c_1$ and $c_2$ be computations of types $A_1 \E \Drt_1$ and $A_2 \E \Drt_2$, respectively,
and $c$ a computation of type $\C$ in the context $x_1 : A_1, x_2 : A_2$.
We would like to show the equivalence
\begin{align*}
   &\hndl_1[\hndl_2[
    \letin{x_1 = c_1}
    (\letin{x_2 = c_2} c), e_2], e_1] \\
  &\qquad \Equiv \hndl_1[\hndl_2[
    \letin{x_2 = c_2}
    (\letin{x_1 = c_1} c), e_2], e_1].
\end{align*}
This is a commutativity law which allows us to transpose, or run in parallel, any
computations that use only non-interfering references.

First, let us establish a simpler equivalence, which we are going to use in the proof:
\begin{align*}
  &\hndl_1[\hndl_2[\call{\iota_1}{\kord{lookup}}{\unt}{\cont{y_1}{
    \letin{x_2 = c_2}
    (\letin{x_1 = c_1'} c)}}, e_2], e_1] \\
  &\qquad \Equiv
  \hndl_1[\hndl_2[
    \letin{x_2 = c_2}
    (\letin{x_1 = \call{\iota_1}{\kord{lookup}}{\unt}{\cont{y_1}{c_1'}}} c), e_2], e_1]
\end{align*}
We proceed by induction on $c_2$.
Since handlers and $\kord{let}$ are strict,
both sides are $\bot$ when we set $c_2$ to $\bot$.
Next, if $c_2$ is $\val e_2$, the two sides are equal already without handlers:
\begin{align*}
  &\call{\iota_1}{\kord{lookup}}{\unt}{\cont{y_1}{
    \letin{x_2 = \val e_2}
    (\letin{x_1 = c_1'} c)}} \\
  &\qquad \Equiv
  \call{\iota_1}{\kord{lookup}}{\unt}{\cont{y_1}{
    \letin{x_1 = c_1'} c[e_2 / x_2]}} \\
  &\qquad \Equiv
  \call{\iota_1}{\kord{lookup}}{\unt}{\cont{y_1}{
    \letin{x_1 = c_1'}
    (\letin{x_2 = \val e_2} c)}} \\
  &\qquad \Equiv
    \letin{x_1 = \call{\iota_1}{\kord{lookup}}{\unt}{\cont{y_1}{c_1'}}}
    (\letin{x_2 = \val e_2} c)
\end{align*}
For the induction step, suppose $c_2$ is a call of $\hash{\iota_2}{\kord{update}}$ (the case $\kord{lookup}$ is similar):
\begin{align*}
  &\hndl_1[\hndl_2[\call{\iota_1}{\kord{lookup}}{\unt}{\cont{y_1}{
  \letin{x_2 = \call{\iota_2}{\kord{update}}{e_2'}{\cont{z}{\kappa \, z}}}
  (\letin{x_1 = c_1'} c)}}, e_2], e_1] \\
  &\quad \Equiv
  \hndl_1[\call{\iota_1}{\kord{lookup}}{\unt}{\cont{y_1}{\hndl_2[
  \letin{x_2 = \kappa \unt}
  (\letin{x_1 = c_1'} c), e_2']}}, e_1] \\
  &\quad \Equiv
  \hndl_1[\hndl_2[\call{\iota_1}{\kord{lookup}}{\unt}{\cont{y_1}{
  \letin{x_2 = \kappa \unt}
  (\letin{x_1 = c_1'} c)}}, e_2'], e_1] \\
  &\quad \Equiv
  \hndl_1[\hndl_2[
  \letin{x_2 = \kappa \unt}
  (\letin{x_1 = \call{\iota_1}{\kord{lookup}}{\unt}{\cont{y_1}{c_1'}}} c), e_2'], e_1] \\
  &\quad \Equiv
  \hndl_1[\hndl_2[
  \letin{x_2 = \call{\iota_2}{\kord{update}}{e_2'}{\cont{z}{\kappa \, z}}}
  (\letin{x_1 = \call{\iota_1}{\kord{lookup}}{\unt}{\cont{y_1}{c_1'}}} c), e_2], e_1].
\end{align*}
In the third step we used the induction hypothesis for $\kappa \unt$.

We now prove the main statement by induction on $c_1$.
When $c_1$ is $\bot$ or a value, the statement is easy.
If $c_1$ is a call of $\hash{\iota_1}{\kord{lookup}}$, we have:
\begin{align*}
  &\hndl_1[\hndl_2[
    \letin{x_1 = \call{\iota_1}{\kord{lookup}}{\unt}{\cont{y_1}{\kappa \, y_1}}}
    (\letin{x_2 = c_2} c), e_2], e_1] \\
  &\qquad \Equiv
  \hndl_1[\call{\iota_1}{\kord{lookup}}{\unt}{\cont{y_1}{\hndl_2[
    \letin{x_1 = \kappa \, y_1}
    (\letin{x_2 = c_2} c), e_2}}], e_1] \\
  &\qquad \Equiv
  \hndl_1[\hndl_2[
    \letin{x_1 = \kappa \, e_1}
    (\letin{x_2 = c_2} c), e_2], e_1] \\
  &\qquad \Equiv
  \hndl_1[\hndl_2[
    \letin{x_2 = c_2}
    (\letin{x_1 = \kappa \, e_1} c), e_2], e_1]
\end{align*}
where in the last step, we used the induction hypothesis for $\kappa \, e_1$.
The last line is equivalent to the other side of the desired equivalence:
\begin{align*}
  &\hndl_1[\hndl_2[
    \letin{x_2 = c_2}
    (\letin{x_1 = \kappa \, e_1} c), e_2], e_1] \\
  &\qquad \Equiv
  \hndl_1[\call{\iota_1}{\kord{lookup}}{\unt}{\cont{y_1}{\hndl_2[
    \letin{x_2 = c_2}
    (\letin{x_1 = \kappa \, y_1} c), e_2]}}, e_1] \\
  &\qquad \Equiv
  \hndl_1[\hndl_2[
    \letin{x_2 = c_2}
    (\letin{x_1 = \call{\iota_1}{\kord{lookup}}{\unt}{\cont{y_1}{\kappa \, y_1}}} c), e_2], e_1].
\end{align*}
The proof for $\kord{update}$ is similar.

\section{Formalization in Twelf}
\label{sec:formalization}

We formalized core \Eff, the effect system and the safety theorems in Twelf. The files are
enclosed with this paper, or can be found at the GitHub
repository~\cite{bauer2013formalization}. The code is compatible with Twelf version 1.7.1.
Further instructions and description of the code can be found in the file
\texttt{README.md}.

\section{Discussion}
\label{sec:discussion}

The only essential feature of \Eff that is missing from core \Eff is dynamic creation of
instances with the $\kpre{new} E$ construct. We omitted it because it leads to significant
complications, both in the effect system and in semantics. One possible treatment of
$\kord{new}$ would be to upgrade the current setup with nominal logic and nominal domain
theory.

Non-termination is a computational effect which is not reflected in our effect system. It
would be interesting to add a ``divergence'' effect. Such an effect would originate from
applications of recursive functions. However, since divergence cannot be handled it would
never disappear from effect information, and would likely become an uninformative
nuisance. A potential remedy would be to separate general recursive definitions, which may
diverge, from structural recursive definitions, which always terminate.

A realistic implementation of our effect system would only be useful if it actually
\emph{inferred} computational effects. Such a possibility was explored by the
second author~\cite{Pretnar13}, and an early prototype is available in the latest
implementation of \Eff~\cite{bauer14:_eff}.

\subsection*{Acknowledgment}

We thank the anonymous referees for useful suggestions and a lesson in domain theory.

\label{sec:acknowledgment}


\bibliographystyle{plain}
\bibliography{bibliography}

\end{document}